\newtheorem*{Theorem}{Theorem}
\newcommand{\m}{\mbox{}}
\newcommand{\be}{\begin{equation}}
\newcommand{\ee}{\end{equation}}
\newcommand{\ba}{\begin{eqnarray}}
\newcommand{\ea}{\end{eqnarray}}
\title{{\sf New Variables for Classical and Quantum Gravity}\\
{\sf in all Dimensions I.  Hamiltonian Analysis}} 
\author{
{\sf N. Bodendorfer}$^1$\thanks{{\sf 
norbert.bodendorfer@gravity.fau.de}},
{\sf T. Thiemann}$^{1,2}$\thanks{{\sf 
thomas.thiemann@gravity.fau.de,
tthiemann@perimeterinstitute.ca}},
{\sf A. Thurn}$^1$\thanks{{\sf 
andreas.thurn@gravity.fau.de}}\\
\\
{\sf $^1$ Inst. for Theoretical Physics III, FAU Erlangen -- N\"urnberg,}\\
{\sf Staudtstr. 7, 91058 Erlangen, Germany}\\
\\
{\sf and}\\
\\
{\sf $^2$ Perimeter Institute for Theoretical Physics,}\\ 
{\sf 31 Caroline Street N, Waterloo, ON N2L 2Y5, Canada}
}
\date{{\small\sf \today}}
\begin{document} 

\maketitle

{\sf
\begin{abstract}

Loop Quantum Gravity heavily relies on a connection formulation of General Relativity such 
that 1. the connection Poisson commutes with itself and 2. the corresponding gauge group
is compact. This can be achieved starting from the Palatini or 
Holst action when imposing the time gauge. Unfortunately, this method is restricted to
$D+1=4$ spacetime dimensions. However, interesting String theories and Supergravity theories
require higher dimensions and it would therefore be desirable to have higher dimensional
Supergravity loop quantisations at one's disposal in order to compare these approaches.

In this series of papers we take first steps towards this goal. The present first paper develops 
a classical canonical platform for a higher dimensional connection formulation of the 
purely gravitational sector. The new ingredient is 
a different extension of the ADM phase space than the one used in LQG which 
does not require the time gauge and which generalises to any dimension $D > 1$. 
The result is a Yang -- Mills theory phase space subject to Gau{\ss}, spatial diffeomorphism
and Hamiltonian constraint as well as one additional constraint, called the simplicity constraint.
The structure group can be chosen to be SO$(1,D)$ or SO$(D+1)$ and the latter choice is 
preferred for purposes of quantisation.
\end{abstract}
}

\newpage

\tableofcontents

\vspace{20mm}

\section{Introduction}
\label{s1}

The quantisation of General Relativity remains one the most important open problems of contemporary physics. Early attempts to quantise the Hamiltonian formulation given by Arnowitt, Deser and Misner \cite{ArnowittTheDynamicsOf} have failed due to non-renormalisability \cite{GoroffQuantumGravityAt, GoroffTheUltravioletBehavior} among other problems. Supergravity in various dimensions entered the picture as a way to resolve these problems, however, not all could be addressed \cite{DeserTwoOutcomesFor, DeserInfinitiesInQuantum, DeserNonrenormalizabilityOfLast}. Meanwhile, Superstring theory \cite{GreenBook1, GreenBook2} and M-Theory \cite{PolchinskiBook1, PolchinskiBook2} have been proposed as theories of quantum gravity. They constrain the spacetime dimension to $D+1=10$ (Superstring theory) or $D+1=11$ (M-Theory) and symmetry arguments suggest that the respective Supergravities are their low energy limits \cite{GreenBook1, GreenBook2}. It is therefore interesting to (loop)-quantise these Supergravities
as a new approach to quantising the low-energy limit of Superstring theory or M-theory. 

However, the programme of loop quantisation (see e.g. \cite{ThiemannModernCanonicalQuantum} and references therein) requires the theory to be formulated in terms of a gauge theory. The reason for that is that 
only for theories based on connections and conjugate momenta background independent
Hilbert space representations have been found so far, which also support the constraints 
of the theory as densely defined and closable operators. Of course, a connection formulation 
is also forced on us if we want to treat fermionic matter as well.  A connection formulation for gravity in $D+1>4$ that can be satisfactorily quantised, even in the vacuum case, has not been given so far. For the case $D+1=4$, it was only in 1986 that Ashtekar discovered his new variables for General Relativity 
\cite{AshtekarNewVariablesFor}. The most important property of these variables is that the connection $A$ 
used has a canonically conjugate momentum $E$ such that $(A,E)$ have standard canonical
brackets, in particular the connection Poisson commutes with itself. This is not trivial. Indeed, the naive connection that one would expect from the first order Palatini formulation does not have this 
crucial property, because the canonical formulation of Palatini gravity suffers from second class
constraints and the Palatini connection then has non trivial corresponding Dirac brackets.
This prohibited so far to find Hilbert space representations, in particular those of LQG type 
in which the connection is represented as a multiplication operator, for these Palatini connection
formulations. The Ashtekar connection
does not suffer from this problem because it is the self-dual part of the 
Palatini connection (or spin connection in the absence of torsion terms). Unfortunately, for the 
only physically interesting case of Lorentzian signature this Ashtekar connection takes values
in the non compact SL$(2,C)$ rather than a compact group and again it is very difficult to find 
Hilbert space representations of gauge theories with non compact structure groups. 
 
As observed by Barbero \cite{BarberoRealAshtekarVariables}, a possible strategy to deal with this non compactness problem is to use the time gauge and to 
gauge fix the boost part of SO$(1,3)$. The resulting connection, which can be seen as the 
self dual part of the spin connection for Euclidean signature,  is then an SU$(2)$ connection.
The price to pay is that the Hamiltonian constraint for Lorentzian 
signature in terms of these variables is more complicated than in terms of the complex valued 
ones. However, this does not pose any problems in its quantisation \cite{ThiemannQSD1}.
Using these variables (which also allow a one parameter freedom related to the Immirzi
parameter \cite{ImmirziRealAndComplex}) a rigorous quantisation of General Relativity with a unique Hilbert space representation could be derived \cite{AshtekarRepresentationsOfThe, AshtekarRepresentationTheoryOf, LewandowskiUniquenessOfDiffeomorphism, FleischhackRepresentationsOfThe}. 
A different way to arrive at the same formulation is to start from the geometrodynamics phase 
space coordinatised by the ADM variables (three metric and extrinsic curvature) and to expand it by introducing (densitised) triads $E$ and conjugate momenta $K$ (basically the extrinsic curvature 
contracted with the triad). The connection is then the triad spin connection $\Gamma$ plus this conjugate
momentum, that is, $A=\Gamma+\gamma K$ where $\gamma$ is the real valued Immirzi parameter. The first miracle that happens in $3$ spatial dimensions 
is that this is at all possible: While $K$ transforms in the defining representation of SO$(3)$,
$\Gamma$ transforms in the adjoint representation of SO$(3)$. But for the case of SO$(3)$, these are
isomorphic and enable to define the object $A$. The second miracle that happens in $3$ spatial
dimensions is that this connection is Poisson self commuting which is entirely non trivial. 
Notice that in three spatial dimensions, the expansion of the phase space alters the number of configuration degrees of freedom from 
six per spatial point (described by the three metric tensor) to nine (described by the co-triad).
To get back to the original ADM phase space, one therefore has to add three constraints
and these turn out to comprise precisely an SU$(2)$ Gau{\ss} constraints just as in Yang Mills theory.

It is clear that this strategy can work only in $D=3$ spatial dimensions: A metric 
in $D$ spatial dimensions has $D(D+1)/2$ configuration degrees of freedom per spatial point
while a $D$-bein has $D^2$. We therefore need $D^2-D(D+1)/2=D(D-1)/2$ constraints which is 
precisely the dimensionality of SO$(D)$. However, an SO$(D)$ connection has $D^2(D-1)/2$ degrees
of freedom. Requiring that connection and triad have equal amount of degrees of freedom
leads to the unique solution $D=3$. Thus in higher dimensions we need a generalisation of 
the procedure that works in $D=3$.  Attempts to construct a higher dimensional connection formulation have been undertaken, but few results are available \cite{NietoTowardsAnAshtekar8, NietoTowardsAnAshtekar12, NietoOrientedMatroidTheory, MeloschNewCanonicalVariables}. Han et al. \cite{HanHamiltonianAnalysisOf} have shown that the higher dimensional Palatini action leads to geometrodynamics when the time gauge is imposed before the canonical analysis. 

In this paper, we will derive a connection formulation for higher dimensional General Relativity by 
using a different extension of the ADM phase space than the one employed in 
\cite{AshtekarNewVariablesFor, AshtekarNewHamiltonianFormulation} and which generalises to arbitrary spacetime dimension 
$D+1$ for $D>1$.
It is based in part on Peldan's seminal work \cite{PeldanActionsForGravity} on the possibility of using 
higher dimensional gauge groups for gravity as well as on his concept of a hybrid spin connection
which naturally appears in the connection formulation of $2+1$ gravity \cite{CarlipQuantumGravityIn}. 
More precisely, the idea is the following:\\
If one starts from the Palatini formulation in $D+1$ spacetime dimensions, then the natural 
gauge group to consider is SO$(1,D)$ or SO$(D+1)$ respectively for Lorentzian or Euclidean
gravity respectively. Both groups have dimension $D(D+1)/2$. This motivates to look
for a connection formulation of the Hamiltonian framework with a connection $A_{aIJ},\;
a=1,..,D;\;I,J=0,..,D$. Such a connection has $D^2(D+1)/2$ degrees of freedom. The corresponding Gau{\ss} constraint removes $D(D+1)/2$ degrees of freedom, leaving us 
with $(D-1)D(D+1)/2$ degrees of freedom. However, a metric in $D$ spatial dimensions 
has only $D(D+1)/2$ degrees of freedom, which means that we need 
$D^2(D-1)/2-D$ additional constraints which together with the ADM constraints and 
the Gau{\ss} constraint form a first class system. To discover this constraint, we need an 
object that transforms in the defining representation of the gauge group. It is given by 
the ``square root'' of the spatial metric $q_{ab}=\eta_{IJ} e_a^I e_b^J$ where $\eta$ 
has Lorentzian or Euclidean signature respectively. Since the $D$ internal vectors $e_a^I$ 
are linearly independent, we can complete them to a uniquely defined $(D+1)$-bein
by the unit vector $e_0^I$ where $\eta_{IJ}  e_a^I e_0^J=0$. Now the momentum 
$\pi^{aIJ}$ conjugate to $A_{aIJ}$ is supposed to be entirely determined by $e_a^I$,
that is, $\pi^{aIJ}\propto \sqrt{\det(q)} q^{ab} e_0^{[I} e_b^{J]} $. In other words,
$\pi$ is ``simple'' and we call these constraints therefore simplicity constraints. 
Since $e_a^I$ has $D(D+1)$
degrees of freedom while $\pi^{aIJ}$ has $D^2(D+1)/2$ these present precisely the required
$D^2(D-1)/2 -D$ constraints. Furthermore, from $e_a^I$ one can construct the hybrid 
spin connection $\Gamma_{aIJ}$ which annihilates $e_a^I$ and the idea, as for Ashtekar's 
variables, is that $A-\Gamma$ is related to the extrinsic curvature. In order to show that
the symplectic reduction of this extension of the ADM phase is given by the ADM phase space, similar to what happens in case of Ashtekar's variables, we need that $\Gamma$ is integrable at least 
modulo the simplicity constraints which we show to be the case.\\
It should be stressed that even in $D+1=4$ this extension of the ADM phase space is different from the one employed in LQG: In LQG the Ashtekar-Barbero connection is given by 
$A^{{\rm LQG}}_{ajk}-\Gamma_{ajk} \propto \epsilon_{jkl} K_a^l$, $i,j,k = 1,...,D$, while in our case in the time 
gauge $e_0^I=\delta^I_0$ we have 
 $A^{{\rm NEW}}_{ajk}-\Gamma_{ajk}$ is pure gauge. Here $\Gamma_{ajk}$ is the spin connection of 
 the corresponding triad. Thus, in the new formulation the information about the extrinsic 
 curvature sits in the $A_{a0j}$ component which is absent in the LQG formulation.
 We also emphasise that it is possible to have gauge group SO$(D+1)$ even for the Lorentzian 
 ADM phase space. While a Lagrangian formulation is only available when spacetime
 and internal signature match as we will see in a companion paper \cite{BTTII}, this opens 
 the possibility to quantise gravity in $D+1$ spacetime dimensions using LQG methods 
 albeit with structure group SO$(D+1)$ and additional (simplicity) constraints \cite{BTTIII, BTTV}.\\  
\\
This paper is is organised as follows:\\ 
\\
In section \ref{s2}, we will define the required kinematical structure of a $(D+1)$-dimensional 
connection formulation of General Relativity. We will study in detail the properties of the 
simplicity constraint and the hybrid spin connection.

In section \ref{s3}, we will postulate an extension of the ADM phase space in terms of 
a connection and its conjugate momentum subject to the corresponding Gau{\ss} constraint and the simplicity 
constraint discussed before. We will then prove that the symplectic reduction of this extension
with respect to both constraints recovers the ADM phase space. There is a one parameter 
freedom in this extension, similar to but different from the Immirzi parameter of standard 
LQG \cite{ImmirziRealAndComplex}.

In section \ref{s4}, we express the spatial diffeomorphism constraint and the Hamiltonian 
constraint in terms of the new variables and prove that the full set of four types of 
constraints, namely Gau{\ss}, simplicity, spatial diffeomorphism and Hamiltonian constraints,
is of first class. This can be done for either choice of SO$(1,D)$ or 
SO$(D+1)$ independently of the spacetime signature. Similar to the situation with standard 
LQG, the Hamiltonian simplifies when spacetime signature and internal signature match
if one chooses unit Immirzi like parameter. There is an additional correction term present
which accounts for the removal of the pure gauge degrees of freedom affected by the gauge 
transformations generated by the simplicity constraint.

In section \ref{s5}, we conclude and discuss future research directions partly already
addressed in our companion papers \cite{BTTII,BTTIII,BTTV,BTTIV,BTTVI, BTTVII}.\\
\\
The further organisation of this series is as follows:\\
\\
In paper \cite{BTTII}, we supplement the present paper by a Lagrangian, namely Palatini, formulation 
in the case that internal and spacetime signature match. As is well known, the canonical 
treatment of Palatini gravity, while leading to a connection formulation, is plagued
by second class constraints and a non trivial Dirac bracket prohibiting a connection representation
in the quantum theory \cite{AlexandrovSU(2)LoopQuantum}. This is in apparent contradiction to the first class structure found in the present paper. The link between the two approaches is through the machinery of gauge unfixing \cite{MitraGaugeInvariantReformulationAnomalous, MitraGaugeInvariantReformulation, AnishettyGaugeInvarianceIn, VytheeswaranGaugeUnfixingIn}, which transforms a second class system into an equivalent first class 
system subject to a modification of the Hamiltonian which in our case is precisely the 
additional correction term in the Hamiltonian constraint found in this paper.

In paper \cite{BTTIII} we quantise the constraints found in this paper using the standard machinery 
developed for the $D=3$ case. In paper \cite{BTTIV}, we consider coupling to fermionic matter and its 
quantisation where we have to 
solve the problem of how to switch from Lorentzian to Euclidean signature Clifford algebras.  In paper \cite{BTTV}, we quantise the simplicity constraint and show that in $D+1=4$
the resulting Hilbert space and the representation of Gau{\ss} and simplicity invariant observables 
coincides with the standard LQG representation.
In paper \cite{BTTVI}, we consider the classical canonical formulation of higher dimensional Supergravity theories, in particular the Rarita-Schwinger fields in terms of new canonical 
variables and formulate the corresponding
quantum theory. Finally, we treat the p-form sector of Supergravity theories in paper \cite{BTTVII}.

\section{Kinematical Structure of $(D+1)$-dimensional Canonical Gravity}
\label{s2}

This section is subdivided into three parts. In the first part we show that simple dimensional
counting and  
natural considerations lead to a unique candidate connection formulation that works
in any spacetime dimension $D+1$ and has underlying structure group SO$(D+1)$ or 
SO$(1,D)$ respectively. We also identify the simplicity constraints additional to the Gau{\ss} 
constraint that such a formulation requires and show that while there is no $D$-bein and no spin 
connection in such a formulation, there is a generalised $D$-bein and a hybrid connection. 
The latter is required in order to express the ADM variables in terms of the connection
and its conjugate momentum.
In the second part we formulate an equivalent expression for the simplicity constraint and discuss
its properties and some subtleties. Finally, in the third part we prove a key property of the hybrid connection, namely its 
integrability modulo simplicity constraints. This will be key to proving in the next section that the symplectic 
reduction of the extended phase space by Gau{\ss} and simplicity constraints recovers the 
ADM phase space. 

\subsection{Preliminaries}
\label{s2.1}

As is well known (see e.g. \cite{ThiemannModernCanonicalQuantum} and references therein), the ADM Hamiltonian 
formulation of vacuum $D+1$ General Relativity
is based on a phase space coordinatised by a canonical pair $(q_{ab},P^{ab})$ with non trivial
Poisson brackets (we set the gravitational constant to unity for convenience)
\be \label{2.1}
\{q_{ab}(x),P^{cd}(y)\}=\delta_{(a}^c\;\delta_{b)}^d\; \delta^{(D)}(x-y) \text{,}
\ee 
where $a,b,c,..\in \{1,..,D\}$ and $x,y,..$ are coordinates on a $D$-dimensional manifold 
$\sigma$. The images of $\sigma$ under one parameter families of embeddings of $\sigma$
into a $(D+1)$-dimensional manifold $M$ constitute a foliation of $M$. Here $q_{ab}$ is a metric 
on $\sigma$ of Euclidean signature. The phase space defined 
by  (\ref{2.1}) is subject to spatial diffeomorphism constraints
\be \label{2.2}
\mathcal{H}_a=-2 q_{ac}\; D_b P^{bc}
\ee
and Hamiltonian constraint
\be \label{2.3}
\mathcal{H}=-\frac{s}{\sqrt{\det(q)}}[q_{ac} q_{bd}-\frac{1}{D-1} q_{ab} q_{cd}]P^{ab} P^{cd}-
\sqrt{\det(q)} R^{(D)} \text{,}
\ee
where $R^{(D)}$ is the Ricci scalar of $q_{ab}$ and $D_a$ denotes the torsion free
covariant derivative annihilating $q_{ab}$. Here $s$ is the signature of the spacetime geometry. Expression (\ref{2.3}) is problematic for $D=1$ 
and in what follows we restrict to $D>1$.

Similar to the formulation of standard LQG in $D+1=4$, we would like to arrive at a connection formulation of this system which then can be quantised using standard LQG techniques.
This requires the corresponding structure group to be compact. Let us recall and sketch how this 
is done for $D=3$, see \cite{ThiemannModernCanonicalQuantum} for all the details: \\
Following Peldan \cite{PeldanActionsForGravity}, the idea is to extend the ADM phase space by additional degrees of freedom and then to impose 
additional first class constraints in such a way that the symplectic reduction of the extended system with respect to these constraints coincides with the original ADM phase space. 
In practical 
terms, this means that one considers a connection $A_a^\alpha$, i.e. a Lie algebra valued 
one form with a Lie algebra of dimension $N$ and a conjugate momentum $\pi^a_\alpha$ 
which is a Lie algebra valued vector density. Here $\alpha, \beta, .. =1,..,N$. Such a Yang-Mills 
phase space is subject to a Gau{\ss} constraint 
\be \label{2.4}
G_\alpha={\cal D}_a \pi^a_\alpha=\partial_a \pi^a_\alpha+f_{\alpha\beta}\;^\gamma\; A_a^\beta\; \pi^a_\gamma \text{,}
\ee
where $f_{\alpha\beta}\;^\gamma$ denote the structure constants of the corresponding
gauge group. The requirement is then that there is a  reduction $(A,\pi)\mapsto 
q_{ab}:=q_{ab}[A,\pi],\;P^{ab}:=P^{ab}[A,\pi]$ such that the Poisson brackets of the ADM 
phase space are reproduced modulo the Gau{\ss} constraint and possible additional first class  
constraints that maybe necessary in order that the correct dimensionality of the reduced phase space is achieved. 

The question is of course which group should be chosen depending on $D$ and 
how to express $q_{ab}, P^{ab}$ in terms of $A_a^\alpha,\;\pi^a_\alpha$. Furthermore, 
one may ask whether the Gau{\ss} constraint is sufficient in order to reduce to the correct number
of degrees of freedom or whether there should be additional constraints. Consider first the 
case that the Gau{\ss} constraint is sufficient. Then the extended phase space has $DN$ configuration
degrees of freedom of which the Gau{\ss} constraint removes $N$. This has to agree with 
the dimension of the ADM configuration degrees of freedom which in $D$ spatial dimensions 
is $D(D+1)/2$. It follows $N(D-1) = D(D+1)/2$. Next we need to relate $(A_a^\alpha,
\pi^a_\alpha)$ to $(q_{ab},P^{ab})$. There may be many possibilities for doing so but here 
we will follow a strategy that is similar to the strategy of standard LQG. We
consider some representation $\rho$ of the corresponding Lie group $G$ of dimension 
$M\ge D$ and introduce 
generalised $D$-beins $e_a^I,\; I,J,K,...=1,..,M$ taking values in this representation with 
$q_{ab}=e^I_a \eta_{IJ} e^J_b$. The requirement $M \ge D$ is needed in order that 
$q_{ab}$ can be chosen to be non degenerate and we furthermore require that it is 
positive definite. Here $\eta$ is a 
$G$-invariant tensor, i.e. $\rho(g)^I_K \eta_{IJ} \rho(g)^{J}_{L}=\eta_{KL}$. The existence of such 
a tensor already severely restricts the possible choices of $G$ and typically $G$ is simply
defined in this way whence $\rho$ will typically be  the defining representation of $G$. We extend the 
covariant derivative $D_a$ to $\rho$ valued objects by asking that $D_a$ annihilates the 
co-$D$-bein
\be \label{2.5}
D_a e_b^I =\partial_a e_b^I-\Gamma^c_{ab} e_c^I+\Gamma_a^\alpha\; [X^\rho_\alpha]^I \m_J
e_b^J=0 \text{,}
\ee
with the Levi-Civita connection $\Gamma^c_{ab}$. This equation 
defines the hybrid (or generalised) spin connection $\Gamma_a^\alpha$. Here the 
$X^\rho_\alpha$ denote the generators of the Lie algebra of $G$ in the representation $\rho$. 

The idea is now that $\tilde{K}_a\m^b:=[A_a^\alpha-\Gamma_a^\alpha]\pi^b_\alpha$ is the 
expression for the ADM extrinsic curvature $\sqrt{\det(q)}K_a\m^b,\;\;P_a\m^b=-s \sqrt{\det(q)}[K_a\m^b
-\delta_a^b K_c\m^c]$, in terms of the new variables. However, there are several caveats.
First of all, it is not clear that (\ref{2.5}) has a non-trivial solution: These are $D^2 M$ equations 
for $DN$ coefficients $\Gamma_a^\alpha$ and thus the system (\ref{2.5}) could be overdetermined. Secondly, even if a solution exists, 
$\Gamma_a^\alpha$ will be a function of $e_a^I$ while we need to express it in terms of the  momentum $\pi^a_\alpha$ conjugate to $A_a^\alpha$. If there is no other constraint than 
the Gau{\ss} constraint, then $\pi^a_\alpha$ itself must be already determined in terms 
of $e_a^I$ which implies that $M=N$: The representation $\rho$ has the same dimension
as the adjoint representation of the Lie group. If one scans the classical Lie groups, then 
the only case where the defining representation and the adjoint representation have the 
same dimension (and are in fact isomorphic) is SO$(3)$ or SO$(1,2)$ respectively, whence $N=3$. In this case, the equation $N(D-1) = D(D+1)/2$ has the solutions $D=2$ and $D=3$ which can be shown to be the only solutions to this equation on the positive integers.

In order to go beyond $D=3$, we therefore need more constraints. We consider now the case 
of the choice $G=\text{SO}(M+1)$ or $G=\text{SO}(1,M)$ which is motivated by the fact that these Lie groups underly 
the Palatini formulation of General Relativity in $M+1$ spacetime dimensions. 
Following Peldan's programme, other choices may be leading, conceivably, to canonical 
formulations of GUT theories. We will leave the investigation of such possibilities for future research.  For this choice we obtain 
$N=M(M+1)/2$ and thus (\ref{2.5}) presents $D^2(M+1)$ equations for $DM(M+1)/2$ 
coefficients.  Explicitly
\be \label{2.6}
\partial_a e_b^I-\Gamma^c_{ab} e_c^I+\Gamma_a^{IJ}\;e_{bJ}=0 \text{,}
\ee
where all internal indices are moved with $\eta$. Since $\Gamma_{a(IJ)}=0$ we obtain the 
consistency condition 
\be \label{2.7}
e_{(cI}\partial_a e_{b)}^I-\Gamma_{(c|a|b)} =0 \text{,}
\ee
where $q_{ab}=e^I_a e_{bI}$ was used. It is not difficult to see that (\ref{2.7}) is in fact 
identically satisfied. Therefore the $D^2(M+1)$ equations (\ref{2.6}) are not all independent,
there are $D^2(D+1)/2$ identities (\ref{2.7}) among them, reducing the number of independent equations to $D^2[M+1-\frac{1}{2}(D+1)]$ for $DM(M+1)/2$ coefficients $\Gamma_{aIJ}$. 
Equating the number of independent equations to the number of equations yields a quadratic 
equation for $M$ with the two possible roots $M=D$ and $M=D-1$. In the second case 
$e_a^I$ is an ordinary $D$-bein and $\Gamma_{aIJ}$ its ordinary spin connection. In the former
case we obtain the hybrid spin connection mentioned before.

Let us discuss the cases SO$(D)$ and SO$(D+1)$ separately (the discussion is analogous for 
SO$(1,D-1)$ and SO$(1,D)$ except that SO$(1,D-1)$ does not allow for a positive definite 
$D$ metric and therefore must be excluded anyway). In the case of SO$(D)$ we have $D^2(D-1)/2$
configuration degrees of freedom and $D(D-1)/2$ Gau{\ss} constraints. In order to match the 
number of ADM degrees of freedom, we therefore need  
$S=D^2(D-1)/2-D(D-1)/2-D(D+1)/2=D^2(D-3)/2$ additional constraints. These must be imposed 
on the momentum $\pi^{aIJ}$ conjugate to $A_{aIJ}$ and require that $\pi^{aIJ}$ is already 
determined by $e_a^I$. Now $e_a^I$ has $D^2$ degrees of freedom while $\pi^{aIJ}$ has 
$D^2(D-1)/2$ so that exactly $S$ degrees of freedom are superfluous. However, there is no
way to to build an object $\pi^{aIJ}$ with $\pi^{a(IJ)} = 0$ from $e_a^I$: In order to match the density
weight we can consider $E^{aI}=\sqrt{\det(q)} q^{ab} e_b^I$, but we cannot algebraically build another object $v^I$ from $e_a^I$ without tensor index in order to define $\pi^{aIJ}=v^{[I} E^{a|J]} $.
The only solution is that there are no superfluous degrees of freedom, which leads back
to $D=3$. Now consider SO$(D+1)$. In this case we have $D^2(D+1)/2$ configuration degrees
of freedom and $D(D+1)/2$ Gau{\ss} constraints requiring 
$S=D^2(D+1)/2-D(D+1)/2-D(D+1)/2=D^2(D-1)/2-D$ additional constraints. The number of superfluous
degrees of freedom in $\pi^{aIJ}$ as compared to $e_a^I$ is now also 
precisely $S=D^2(D+1)/2 -D(D+1)$. In contrast to the previous case, however, now it is possible 
to construct an object without tensor indices: If we assume that the $D$ internal vectors 
$e_a^I,\;a=1,..,D$ are linearly independent then we construct the common normal
\be \label{2.8}
n_I:=\frac{1}{D!} \frac{1}{\sqrt{\det(q)}}\epsilon^{a_1..a_D} \epsilon_{IJ_1..J_D}
e_{a_1}^{J_1}\;..\;e_{a_D}^{J_D} \text{,}
\ee 
which satisfies $e_a^I n_I=0,\; n_I n^I=\zeta$ where $\zeta=1$ for SO$(D+1)$ and $\zeta=-1$ for 
SO$(1,D)$. Notice that $n_I$ is uniquely (up to sign) determined by $e_a^I$. We may now require 
that 
\be \label{2.9}
\pi^{aIJ}=2\sqrt{\det(q)} q^{ab} n^{[I} e_b^{J]} =:2 n^{[I} E^{a|J]}\text{.}
 \ee
 These are  the searched for constraints on $\pi^{aIJ}$ and constitutes our candidate 
 connection formulation for General Relativity in arbitrary spacetime dimensions $D+1\ge 3$.
 Since they require $\pi$ to come from 
 a generalised $D$-bein, we call them {\it simplicity constraints}.
 Notice that $D^2(D-1)/2-D=0$ for $D=2$. Indeed, $2+1$ gravity is naturally defined as 
 an SO$(1,2)$ or SO$(3)$ gauge theory.

 \subsection{Properties of the Simplicity Constraints}
\label{s2.2}

 The form of the constraint (\ref{2.9}) is not yet satisfactory because the constraint should be 
 formulated purely in terms of $\pi^{aIJ}$. The same requirement applies to the hybrid connection
 to which we will turn in the next subsection. \\
 \\
 Given $\pi^{aIJ}$ and any unit vector $n_I$ we may define $E^{aI}[\pi,n]:= - \zeta \pi^{aIJ} n_J$. This 
 object then automatically satisfies $E^{aI} n_I=0$. Furthermore we may define the 
 transversal projector 
 \be \label{2.10}
 \bar{\eta}^I_J[n]:=\delta^I_J-\zeta n^I n_J\;\;\Rightarrow\;\; \bar{\eta}^I_J \; n^J=0
 \ee
 and define 
 \be \label{2.11}
 \bar{\pi}^{aIJ}:=\bar{\eta}^I_K[n]\;\bar{\eta}^J_L[n]\; \pi^{aKL} \text{.}
 \ee
 In what follows, all tensors with purely transversal components will carry an overbar. We obtain 
 the decomposition
 \be \label{2.12}
 \pi^{aIJ}=\bar{\pi}^{aIJ}+2 n^{[I} E^{a|J]} \text{.}
 \ee
 It appears that the simplicity constraint now is equivalent to $\bar{\pi}^{aIJ}=0$. However, there 
 are two subtleties: First, at this point $n^I$ is an extra structure next to $\pi^{aIJ}$ which is
 required to define (\ref{2.11}). Therefore the decomposition (\ref{2.12}) is not intrinsic
 and $n^I$ appears as an extra degree of freedom. It is therefore necessary to give an 
 intrinsic definition of $n^I$. Next, suppose that we have achieved to do so, then 
 $\bar{\pi}^{aIJ}$ constitute $D^2(D-1)/2$ degrees of freedom rather than the required
 $D^2(D-1)/2-D$ while due to $E^a_I n^I=0$ the $E^a_I$ constitute only $D^2$ degrees 
 of freedom rather than $D(D+1)$. 
 
 To remove these subtleties, it is cleaner to adopt the following point of view:\\ 
 We consider $D+1$ vector densities $E^a_I$ to begin with such that the corresponding \mbox{$D(D+1)$-matrix} has maximal rank. From these we can construct the 
 densitised inverse metric 
 \be \label{2.13}
 Q^{ab}:=E^a_I E^b_J \eta^{IJ}\text{,}
 \ee
 which we require to have Euclidean signature
 as well as their common normal 
 \be \label{2.14}
 n_I[E]:=\frac{1}{D!\;\sqrt{\det(Q)}}\; \epsilon_{a_1..a_D}\;\epsilon_{IJ_1..J_D} 
 E^{a_1 J_1} .. E^{a_D J_D} \text{,}
 \ee
 which is now considered as a function of $E$. Notice that $n_I n^I=\zeta$. Therefore also 
 $\bar{\eta}^I_J=\bar{\eta}^I_J[E]$ is a function of $E$. We can again apply the 
 decomposition (\ref{2.12}) and now have cleanly deposited the searched for 
 degrees of freedom into $E^a_I$.  However, while $n^I$ is now intrinsically defined via 
 $E^a_I$, the constraints $\bar{\pi}^{aIJ}=0$ are still $D$ to many. We should remove $D$ additional
 degrees of freedom from $\bar{\pi}^{aIJ}$. To do so we impose a tracefree condition. 
 Consider the object
 \be \label{2.15}
 E_a^I:=Q_{ab} E^{bI},\;\;Q_{ac} Q^{cb}:=\delta_a^b \text{.}
 \ee
 It follows easily from the definitions that
 \be \label{2.16}
 E_a^I E^b_I=\delta_a^b,\;\;E_a^I E^a_J=\bar{\eta}^I_J \text{.}
 \ee
 Consider the tracefree, transverse projector 
 \be \label{2.17}
 P^{aIJ}_{bKL}[E]:=\delta^a_b\bar{\eta}^I_{[K} \bar{\eta}^J_{L]}-\frac{2}{D-1} E^{a[I} \; E_{b[K}
 \;\bar{\eta}^{J]}_{L]} \text{.}
 \ee
 Then for any tensor $\pi^{aIJ}$ we have with $\bar{\pi}^{aIJ}_T=
 P^{aIJ}_{bKL} \pi^{aIJ}$ that 
 \be \label{2.18}
 \bar{\pi}^J:=E_{aI}\bar{\pi}^{aIJ}_T=0
 \ee
 and $\bar{\pi}^{aIJ}_T n_I=0$. Notice that $\bar{\pi}^{aIJ}_T$ has only $D^2(D-1)/2-D$ degrees
 of freedom independent of $E^a_I$.\\ 
 \\
 We therefore consider in what follows tensors $\pi^{aIJ}$ of the following form 
 \be \label{2.19}
 \pi^{aIJ}[E,\bar{S}_T]:=\bar{S}^{aIJ}_T+2 n^{[I}[E] \, E^{a|J]} \text{,}
 \ee
 where $\bar{S}_T$ and $E$ are considered as independent parameters for $\pi$. Notice that 
 $\bar{S}_T$ can be constructed as $P\cdot S$ from an arbitrary tensor $S^{aIJ}$.
 Such tensors can be intrinsically described as follows:\\
 Given $\pi$, there exists a normal $n_I[\pi]$ such that the following holds: Define 
 $E^a_I[\pi,n]= - \zeta \pi^{aIJ}n_J$ and $\bar{\pi}^{aIJ}[\pi,n]$ as above. Then automatically
 \be \label{2.20}
\bar{\pi}^J[\pi,n]:= \bar{\pi}^{aIJ}[\pi,n] Q_{ab}[\pi,n] E^b_I[\pi,n]=0 \text{.}
 \ee
This is a set of $D$ independent (since automatically $\bar{\pi}^I n_I=0$ no matter what $n^I$ is), non-linear equations  for the $D$ independent (due to the normalisation $n_I n^I=\zeta$) components of 
$n^I$.
In the appendix, we study this non trivial system of equations further and show that it can possibly
be solved by fixed point methods. At present we do not know 
whether at least tensors $\pi^{aIJ}$ subject to the condition that $\zeta \pi^{aIJ}\pi^b_{IJ}/2$ is 
positive definite always allow for such a solution $n^I$, however, we know that the number 
of possible solutions is always finite because we can transform (\ref{2.20}) into a system 
of polynomial equations. In what follows, we will assume that the solution $n^I[\pi]$ is 
in fact unique by suitably restricting the set of allowed tensors $\pi^{aIJ}$. This could 
imply that the set of such tensors no longer has the structure of a vector space which 
however does not pose any problems for what follows.\\
\\
On the other hand, we can prove the following for general $\pi^{aIJ}$:
\newpage
\begin{Theorem} \label{th2.1} ~\\
Let $D\ge 3$ and\footnote{For $D=2$ no simplicity constraints are needed since 
$D^2(D-1)/2-D=0$.}
\be \label{2.21}
S^{ab}_{\overline{M}}:=\frac{1}{4} \epsilon_{IJKL\overline{M}} \pi^{aIJ} \pi^{bKL} \text{,}
\ee
where $\overline{M}$ is any totally skew $(D-3)$-tuple of indices in $\{0,1,..,D\}$. Then
\be \label{2.22}
S^{ab}_{\overline{M}}=0\;\;\forall \;\; \overline{M}, \;a, \;b \;\;\;\Leftrightarrow\;\;P^{aIJ}_{bKL}[\pi,n]\;\pi^{bKL}=0
\ee
for any unit vector $n$ where $P^{aIJ}_{bKL}[\pi,n]:=[P^{aIJ}_{bKL}[E]]_{E=E[\pi,n]}$ and 
$E^{aI}[\pi,n]=- \zeta \pi^{aIJ} n_J$ and where $P[E]$ is defined in (\ref{2.17}).
Here we assume that $Q^{ab}[\pi,n]:=\pi^{aIK} \pi^{bJL} \eta_{IJ} n_K n_L$ is non degenerate
for any (timelike for $\zeta=-1$) vector $n_I$.
\end{Theorem}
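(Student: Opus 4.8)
\emph{Proof plan.} The plan is to translate the quadratic constraint into a statement about exterior forms on the $(D{+}1)$--dimensional internal space. For fixed $a$ one regards the bivector $\pi^{aIJ}$ as a $2$--form and writes $\pi^a\wedge\pi^b$ for the associated $4$--form; since $\epsilon$ is totally skew, $S^{ab}_{\overline M}=\tfrac14\,\epsilon_{IJKL\overline M}\,\pi^{aIJ}\pi^{bKL}$ is, up to a fixed numerical factor, the $\overline M$--component of the $\epsilon$--dual of $\pi^a\wedge\pi^b$, an operation injective on $4$--forms, so
\[
  S^{ab}_{\overline M}=0\ \ \forall\,\overline M,a,b
  \quad\Longleftrightarrow\quad
  \pi^a\wedge\pi^b=0\ \ \forall\,a,b .
\]
I would also record two easy consequences of the nondegeneracy hypothesis: (i) no $\pi^a$ vanishes, since otherwise $Q^{ab}[\pi,n]$ would have a null row; and (ii) for every admissible $n$ (i.e. $n$ as in the hypothesis) the $D$ vectors $E^{aI}:=E^{aI}[\pi,n]=-\zeta\pi^{aIJ}n_J$ lie in the $D$--dimensional transverse space $\overline{V}_n:=\{\,w^I:\ w^In_I=0\,\}$ and have $Q^{ab}[\pi,n]=E^{aI}E^b{}_I$ for their Gram matrix, hence form a basis of $\overline{V}_n$; in particular (\ref{2.16}) holds for $E=E[\pi,n]$ and (\ref{2.17}) is well defined.

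For the direction ``$\Leftarrow$'', I would start from $P^{aIJ}_{bKL}[\pi,n_0]\,\pi^{bKL}=0$ for some admissible unit $n_0$. Contracting (\ref{2.17}) this says the $n_0$--transverse part $\bar\pi^{aIJ}$ of the decomposition (\ref{2.12}) coincides with its pure--trace piece, $\bar\pi^{aIJ}=\tfrac{2}{D-1}\,E^{a[I}\bar\pi^{J]}$ with $\bar\pi^J:=E_{bK}\bar\pi^{bKJ}$ and $E_{bK}:=Q_{bc}[\pi,n_0]E^c{}_K$, so that $\pi^{aIJ}=2\,n_0^{[I}E^{a|J]}+\tfrac{2}{D-1}\,E^{a[I}\bar\pi^{J]}$. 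As a $2$--form $\pi^a$ is then $n_0\wedge E^a+\tfrac1{D-1}\,E^a\wedge\bar\pi$ (indices lowered); wedging two such forms, the $n_0\wedge n_0$ and $\bar\pi\wedge\bar\pi$ pieces vanish identically and the remaining two cross terms cancel, whence $\pi^a\wedge\pi^b=0$ and hence $S^{ab}_{\overline M}=0$.

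For ``$\Rightarrow$'', assume $\pi^a\wedge\pi^b=0$ for all $a,b$. Taking $a=b$ and using that a $2$--form squares to zero exactly when it has rank $\le2$, each $\pi^a$ is decomposable, $\pi^{aIJ}=2\,u_a^{[I}w_a^{J]}$, with a nonzero $2$--plane $W_a:=\mathrm{span}(u_a,w_a)$; then $\pi^a\wedge\pi^b=0$ forces $\dim(W_a+W_b)\le3$. A short linear--algebra argument then shows that either (a) the $W_a$ have a common line $\langle v\rangle$, or (b) $\sum_aW_a$ has dimension $\le3$: if $W_1\ne W_2$, intersect them in a line $\langle v\rangle$; if every $W_a\ni v$ this is (a), otherwise some $W_a\not\ni v$ makes $W_1+W_2+W_a$ a fixed $3$--space $U$ and then every further $W_b$ must lie in $U$ (else $W_b\cap U$ would be a single line common to $W_1,W_2,W_a$, hence $=\langle v\rangle\subset W_a$, a contradiction), giving (b). Case (b) is ruled out by the hypothesis: there every $\pi^{aIJ}$ is supported in a fixed subspace $U$ with $\dim U\le3$, so all $E^{aI}[\pi,n]\in U$; for $D\ge4$ this already makes the $D$ vectors $E^{aI}[\pi,n]$ dependent, and for $D=3$ one checks (writing $\pi^{aIJ}|_U$ as the $\epsilon$--dual of a covector, so $E^{aI}[\pi,n]$ is a cross product with the $U$--component of $n$) that the three $E^{aI}[\pi,n]$ lie in a common $2$--plane and are therefore dependent; either way $Q^{ab}[\pi,n]$ would be degenerate. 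Hence $\pi^{aIJ}=2\,v^{[I}E^{a|J]}$ for a fixed internal $v$. To finish, fix any admissible unit $n$: if $v^In_I=0$ then all $E^{aI}[\pi,n]$ are parallel and $Q^{ab}[\pi,n]$ has rank $\le1$, a contradiction, so $v\!\cdot n\ne0$. Splitting $v,E^a$ into their $\overline{V}_n$--parts $v_\perp,E^a_\perp$ and their $n$--parts and using (\ref{2.12}), a direct computation yields $\bar\pi^{aIJ}=2\,v_\perp^{[I}E^{a|J]}_\perp$ and $E^{aI}[\pi,n]=-\zeta(E^a\!\cdot n)\,v_\perp^I+\zeta(v\!\cdot n)\,E^{aI}_\perp$, hence $\bar\pi^{aIJ}=-2\zeta(v\!\cdot n)^{-1}E^{a[I}[\pi,n]\,v_\perp^{J]}$. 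Thus $\bar\pi^{aIJ}$ has the form $E^{a[I}[\pi,n]\,(\cdot)^{J]}$ with transverse second slot, and since the $E^{aI}[\pi,n]$ span $\overline{V}_n$, a direct check from (\ref{2.17}) and (\ref{2.16}) shows $P[\pi,n]$ annihilates every such tensor; therefore $P^{aIJ}_{bKL}[\pi,n]\,\pi^{bKL}=P^{aIJ}_{bKL}[\pi,n]\,\bar\pi^{bKL}=0$ for every admissible $n$.

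I expect the main obstacle to be case (b) in the direction ``$\Rightarrow$'': the classification of pairwise--intersecting $2$--planes has to be carried out carefully, and one then has to check --- separately for $D=3$ and for $D\ge4$, and including null or otherwise degenerate configurations --- that the ``topological'' sector $\dim\sum_aW_a\le3$ is genuinely incompatible with $Q^{ab}[\pi,n]$ being nondegenerate for \emph{every} admissible $n$. The remaining ingredients (the $\epsilon$--duality reformulation, the two $2$--form wedge computations, and the projector identity for $P[\pi,n]$) are routine, though the antisymmetrisation conventions in (\ref{2.17}) and the frame relations (\ref{2.16}) must be tracked with some care.
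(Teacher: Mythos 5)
Your proposal is correct, and the ``$\Leftarrow$'' half is essentially the paper's own converse step (there one writes $\pi^{aIJ}=2v^{[I}E^{a|J]}$ with $v=n-\tfrac{1}{D-1}\bar\pi$ and notes the quadratic constraint is then identically satisfied, which is the same cancellation as your wedge computation). The ``$\Rightarrow$'' half, however, follows a genuinely different route. The paper never classifies the bivectors: it inserts the decomposition (\ref{2.12}) for an arbitrary unit $n$ directly into $\pi^{a[IJ}\pi^{bKL]}=0$, contracts once with $n_I$ to get $E^{(a[J}\bar\pi^{b)KL]}=0$, and once more with $E_{aJ}$ (this is where the nondegeneracy of $Q^{ab}[\pi,n]$ enters, to define $E_{aJ}$) to land immediately on $(D-1)\,P^{bKL}_{aIJ}[\pi,n]\,\pi^{aIJ}=0$ --- three lines, uniform in $D$, valid verbatim for every admissible $n$, and the ``topological sector'' never has to be confronted. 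Your argument instead is the Plebanski/BF-style classification (each $\pi^a$ decomposable since $\pi^a\wedge\pi^a=0$, pairwise intersecting planes, hence common line versus common $3$-space), which is precisely the alternative route the paper attributes to \cite{BTTII}/\cite{FreidelBFDescriptionOf}; there the nondegeneracy assumption must be invoked by hand to kill the $\dim\sum_aW_a\le 3$ sector, exactly as you do. What your route buys is the explicit geometric picture, including where the topological/degenerate sectors would appear if nondegeneracy were dropped; what the paper's route buys is brevity and no case analysis. Your worry about null configurations in the $D=3$ exclusion of case (b) is harmless: one need not orthogonally split $n$ along $U$; writing $\pi^a|_U=\epsilon_U(\,\cdot\,,\,\cdot\,,m^a)$, the vectors $E^{aI}[\pi,n]$ are all annihilated by the covector $n_J$ restricted to $U$, so they lie in a subspace of dimension at most $2$ (and vanish altogether if $n_J|_U=0$), forcing $\det Q[\pi,n]=0$ regardless of whether the induced metric on $U$ is degenerate. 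Your final projector check ($P[\pi,n]$ annihilating $2E^{a[I}[\pi,n]\,u^{J]}$ for transverse $u$, and $P\cdot\pi=P\cdot\bar\pi$) is also correct.
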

This result implies that although $S^{ab}_{\overline{M}}$ are $D(D+1)/2\;{D+1 \choose 4}$
equations which exceeds $D^2(D-1)/2-D$ for $D>3$ only $D^2(D-1)/2$ of them are 
independent. The constraint $S^{ab}_{\overline{M}}=0$ does not fix $n^I$ and makes 
no statement about the trace part $\bar{\pi}^J[\pi,n]=\bar{\pi}^{aIJ}[\pi,n] E_{aI}[\pi,n]$.
Given that the theorem holds for any $n$ it is natural to fix $n$ such that the trace part 
vanishes simultaneously as otherwise we would have only that $\bar{\pi}^{aIJ}=2E^{a[I}\bar{\pi}^{J]}/(D-1)$
and not $\bar{\pi}^{aIJ}=0$ or $\pi^{aIJ}=2 n^{[I}E^{a|J]} $ on the constraint surface of the 
simplicity constraint.
\begin{proof} ~\\
Obviously 
\be \label{2.23}
S^{ab}_{\overline{M}}=0\;\; \Leftrightarrow\;\;\epsilon^{IJKL\overline{M}} S^{ab}_{\overline{M}}
= \frac{\zeta}{4}\: 4!\; (D-3)!\; \pi^{a[IJ}\;\pi^{bKL]}=0 \text{.}
\ee 
Given $\pi$, consider any unit vector $n$ and decompose as in (\ref{2.12}) 
\be \label{2.24}
\pi^{aIJ}=\bar{\pi}^{aIJ}[\pi,n]+2 n^{[I} E^{a|J]}[\pi,n] \text{.}
\ee
Inserting into (\ref{2.24}), we obtain 
\be \label{2.25a}
\pi^{a[IJ}\pi^{bKL]}=\bar{\pi}^{a[IJ}\bar{\pi}^{bKL]}+4 n^{[I} E^{(a|J} \bar{\pi}^{b)KL]}=0 \text{.}
\ee
Contracting with $n_I$ yields 
\be \label{2.25b}
E^{(a[J} \bar{\pi}^{b)KL]}=0 \text{.}
\ee
Contracting further with $E_{aJ}$ yields
\be \label{2.26}
(D-1)\;[\bar{\pi}^{bKL}-\frac{2}{D-1} E^{b[K} \bar{\pi}^{aJ|L]} E_{aJ}]
=(D-1) \;P^{bKL}_{aIJ}[\pi,n]\pi^{aIJ}=0 \text{.}
\ee
We conclude $\pi^{aIJ}=2 v^{[I} E^{a|J]},\; v^I=(n^I - \frac{1}{D-1} \bar{\pi}^{bJI}E_{bJ})$ and inserting 
back into (\ref{2.23}) we see that it is identically satisfied.
\end{proof}
The theorem therefore says that on the constraint surface $\pi^{aIJ}=2 v^{[I} E^{a|J]} $ for some 
vector $v$ which is not necessarily normalised and not necessarily normal to $E^{aI}$ but such 
that $E^{aI}, v^I$ constitute $D+1$ linearly independent internal vectors.  We can however draw,
for $\zeta=-1$,
some additional conclusion from the requirement that $Q^{ab}=\pi^{aIJ} \pi^b_{IJ}/(2\zeta)$ should 
have Euclidean signature. First of all, $v^I$ cannot be null since otherwise $Q^{ab}\propto
(E^a_I v^I) (E^b_J v^J)$ would be degenerate. If $v^I$ would be spacelike then 
consider $\tilde{E}^a_I=E^a_I-E^a_J v^J v_I/(v^K v_K)$. It follows $\pi^{aIJ}= 2v^{[I} \tilde{E}^{a|J]}$
and $Q^{ab}\propto \tilde{E}^{aI} \tilde{E}^b_I$. Since $v^I,\tilde{E}^a_I$ constitutes a $(D+1)$-bein
and $v^I$ is spacelike while $\eta$ is Lorentzian, also $Q^{ab}$ would need to be Lorentzian.
Hence $v^I$ must in fact be timelike for $\zeta=-1$.

We may therefore absorb for either signature the normalisation of $v$ into
$E^a_I$ and define $n_I:=v_I/\sqrt{\zeta v_J v^J}$ as well as 
$\tilde{E}^{aI}=\sqrt{\zeta v_K v^K}E^{aJ} \bar{\eta}^I_J$. Then 
$2 v^{[I} E^{a|J]} = 2 n^{[I} \tilde{E}^{a|J]} $ with $\tilde{E}^{aI} n_I=0, \; n^I n_I=\zeta$.\\ 
 \\
Therefore, the constraint surface defined via (\ref{2.21}) is the same as the one given by 
$\bar{\pi}^{aIJ}_T$  above, where we assumed 
that $\pi$ is of the form (\ref{2.19}) and constitutes the unique decomposition 
of $\pi^{aIJ}$ with no trace part.  In what follows, we will use the simplicity 
constraint in the form (\ref{2.21}). However, it will be convenient to have the presentation 
(\ref{2.19}) at one's disposal when we work off the constraint surface.\\
\\
Notice that the proof given above also in the case $D=3$ does not allow for a ``topological 
sector'' $\pi^{aIJ}=\epsilon^{IJKL} n^K E^{aL}$ or ``degenerate sector'' due to the 
non degeneracy assumption. This assumption is dropped in the alternative proof in 
\cite{BTTII} which is based on
\cite{FreidelBFDescriptionOf} which is why the topological sector does appear there.

\subsection{Integrability of the Hybrid Connection Modulo Simplicity Constraint}
\label{s2.3}

The hybrid connection is defined via (\ref{2.6}) on the constraint surface $S^{ab}_{\overline{M}}=0$.
We want to define an extension off the constraint surface such that the resulting expression 
is integrable, i.e. is the functional derivative $\Gamma_{aIJ}=\delta F/\delta \pi^{aIJ}$ of a generating functional  $F=F[\pi]$.  To that end, we need the explicit expression of 
$\Gamma_{aIJ}$ in terms of $e_a^I$.\\
\\
To begin with, we notice that $D_a n^I=0$. To see this we consider its $D+1$ independent 
components $n_I D_a n^I=\frac{1}{2}D_a (n^I n_I)=0$ and $e_b^I D_a n^I=-n^I D_a e_b^I=0$.
We decompose 
\be \label{2.27}
\Gamma_{aIJ}=\bar{\Gamma}_{aIJ}+2 n_{[I} \bar{\Gamma}_{a|J]},\; 
\bar{\Gamma}_{aI}=- \zeta \Gamma_{aIJ} n^J 
\ee
and further 
\be \label{2.28}
\bar{\Gamma}_{aIJ}=\bar{\Gamma}_{abc} e^b_I e^c_J,\;\;
\bar{\Gamma}_{aI}=\bar{\Gamma}_{ab} e^b_I \text{,}
\ee
with $e^b_I=q^{ab} e_{bI},\; q^{ac} q_{cb}=\delta^a_b,\;\;q_{ab}=e_a^I e_{bI}$. 
We find 
\be \label{2.29}
\bar{\Gamma}_{ab}=- \zeta n_I \partial_a e_b^I,\;\;
\bar{\Gamma}_{abc}=\Gamma_{bac}-e_{bI} \partial_a e_c^I\text{,}
\ee 
where $\Gamma_{bac}=q_{bd} \Gamma^d_{ac}$ is the Levi-Civita connection. Combining
these formulae, we obtain
\ba \label{2.30}
\Gamma_{aIJ}[E] &=&
-[\eta_{K[I}+\zeta n_K\; n_{[I}]e^b_{J]}\partial_a e_b^K+\Gamma^b_{ac} e_{b[I} e^c_{J]}
\nonumber\\
&=& \zeta n_{[I} \partial_a n_{J]}+e_{b[I} \partial_a e^b_{J]}+
\Gamma^b_{ac} e_{b[I} e^c_{J]} \text{,}
\ea 
where we used here and will also use frequently later 
$n_K \partial_a E^{bK}=-E^{bK} \partial_a n_K$, $n^K\partial_a n_K=0$ and 
$n_{[I} \bar{\eta}^K_{J]}=n_{[I} \eta^K_{J]}$. 

To write $\Gamma_{aIJ}$ in terms of $\pi^{aIJ}$, we notice the following weak identities
modulo the simplicity constraint, that is $\pi^{aIJ}\approx 2 n^{[I} E^{a|J]} $,
\ba \label{2.31}
\pi^{aIJ}\pi^b_{IJ} &\approx& 4 n^{[I} E^{a|J]}\; n_{[I} E^b_{J]}=2\zeta E^{aI} E^b_I=2\zeta Q^{ab} \text{,}
\nonumber\\
Q_{ab} \pi^{aKI}\pi^b_{KJ} & \approx & [ n^K E^{aI} - n^I E^{aK} ]\;[ n_K E_{aJ} - n_J E_{aK}] \nonumber \\
&=& D n^I n_J +\zeta \bar{\eta}^I_J=(D-1)n^I n_J +\zeta \eta^I_J \text{,}
\nonumber\\
E^{a[I} n^{J]} &=& - \zeta \pi^{a[I|L} \; n^{J]} n_L \text{,}
\nonumber\\
Q_{bd} \pi^{dK}\;_{[I}\; \pi^c_{K|J]} &\approx& 
[ n^K E_{b[I} - E_{b}^K n_{[I} ]\;[E^c_{J]} n_K - n_{J]} E^{c}_{K}]=\zeta\;E_{b[I} E^c_{J]}=\zeta \;e_{b[I} e^c_{J]} \text{,}
\nonumber\\
Q_{bc} \pi^{bK}\;_{[I}\partial_a \pi^c_{K|J]}
&\approx & [ n^K E_{c[I} - E_{c}^K n_{[I} ] \;\partial_a\; [E^c_{J]} n_K - n_{J]} E^c_K]
\nonumber\\
&=& -n^K E_{c[I} \; [n_{J]} (\partial_a E^c_K)-(\partial_a E^c_{J]}) n_K] \nonumber \\
& & + E_{c}^K n_{[I} \; [(\partial_a n_{J]}) E^c_K-E^c_{J]} (\partial_a n_K)] 
\nonumber\\
&=& (D-1)n_{[I} (\partial_a n_{J]}) +E_{c[I} n_{J]} E^c_K (\partial_a n^K)+\zeta E_{c[I} 
(\partial_a E^c_{J]})
\nonumber\\
&=&
(D-2) n_{[I} (\partial_a n_{J]}) +\zeta E_{c[I}  (\partial_a E^c_{J]})
\nonumber\\
&=&
(D-2) n_{[I} (\partial_a n_{J]}) +\zeta e_{c[I}  (\partial_a e^c_{J]}) \text{,}
\nonumber\\
\bar{\eta}_I^K \bar{\eta}_J^L
Q_{bd} \pi^{dM}\;_{[K}\; \partial_a \pi^c_{|M|L]} &\approx&
\zeta \bar{\eta}_{[I}^K \bar{\eta}_{J]}^L e_{b[K} \partial_a e^b_{L]}
\nonumber\\
&=& \zeta e_{b[I} \partial_a e^b_{J]}-n_{[I}\partial_a n_{J]} \text{.}
\ea
Consider the quantities
\be \label{2.33}
T_{aIJ}:=\pi_{bK[I} \partial_a \pi^{bK}\;_{J]},\;\;
T^c_{bIJ}:=\pi_{bK[I} \pi^{cK}\;_{J]} \text{,}
\ee 
where $\pi_{aIJ}=Q_{ab} \pi^b_{IJ}$. Then 
\be \label{2.34}
(D-1) n_{[I}\partial_a n_{J]}=T_{aIJ}-\bar{T}_{aIJ},\;\;
(D-1) \zeta e_{b[I} \partial_a e^b_{J]}=T_{aIJ}+(D-2)\bar{T}_{aIJ} \text{.}
\ee
Inserting (\ref{2.33}) and (\ref{2.34}) into (\ref{2.30}) then leads to the explicit expression
\be \label{2.35}
\Gamma_{aIJ}[\pi]=\frac{2\zeta}{D-1}T_{aIJ}+\frac{\zeta(D-3)}{D-1}\bar{T}_{aIJ}+\zeta \Gamma^b_{ac}
T^c_{bIJ} \text{.}
\ee
Together with $Q^{ab}=:\det(q) q^{ab}$ which expresses $\Gamma^b_{ac}$ in terms of
$Q^{ab}=\pi^{aIJ}\pi^b_{IJ}/(2\zeta)$,
 this determines $\Gamma_{aIJ}$ completely in terms of 
$\pi^{aIJ}$ if we simply replace the $\approx$ signs in (\ref{2.31}) by $=$ signs 
and take the left hand sides
as definitions for the right hand sides. \\
\\
It transpires that $\Gamma_{aIJ}$ is a rational, homogeneous function of $\pi$ and its first derivatives which vanishes at $\pi=0$. Therefore, if $\Gamma_{aIJ}[\pi]$
has a generating functional, then it is given by\footnote{If a one form $\Gamma_M$ 
is exact, i.e. has 
potential $U$ with $\Gamma_M=U_{,M}$ then $U(\pi)-U(\pi_0)=\int_{\gamma_{\pi_0,\pi}} \Gamma$
for any path $\gamma_{\pi_0,\pi}$ between $\pi_0$ and $\pi$. If $\Gamma$ is defined at $\pi_0=0$ to vanish then choosing the straight path $t\mapsto t\pi$ yields $U(\pi)=
const.+\int_0^1\; dt\pi^M \Gamma_M(t\pi)$.} 
\be \label{2.36}
F'[\pi]=\int\; d^Dx\; \pi^{aIJ}\; \Gamma_{aIJ}[\pi] \text{.}
\ee
Variation of $F'$ with respect to $\pi^{aIJ}$ yields 
\ba \label{2.37} 
\delta F' 
&=& \int\; d^Dx\left( \delta\pi^{aIJ}\; \Gamma_{aIJ}[\pi]+\pi^{aIJ} \delta\Gamma_{aIJ}[\pi]\right)
\nonumber\\
&=& \int\; d^Dx\left(\delta\pi^{aIJ}\; \Gamma_{aIJ}[\pi]
+\pi^{aIJ} [\delta\Gamma_{aIJ}[E]+\delta S'_{aIJ}]\right)
\nonumber\\
&=& \delta[\int\; d^Dx \pi^{aIJ} S'_{aIJ}]
+\int\; d^Dx\left( \delta\pi^{aIJ}\; \Gamma_{aIJ}[\pi]
+2 n^{[I} E^{a|J]} \;\delta\Gamma_{aIJ}[E]\right)
\nonumber\\
&&+\int\; d^Dx\; \left(S^{aIJ}\;\delta\Gamma_{aIJ}[E]-\delta\pi^{aIJ} \;S'_{aIJ}\right) \text{,}
\ea
where $S^{aIJ}:=\pi^{aIJ}-2 n^{[I} E^{a|J]} $ and 
$S'_{aIJ}:=\Gamma_{aIJ}[\pi]-\Gamma_{aIJ}[E]$ both vanish on the constraint surface 
of the simplicity constraint. We see that $F'$ itself cannot be a generating functional
but rather 
\be \label{2.38}
F=F'-\int\; d^Dx \, \pi^{aIJ} S'_{aIJ}\text{,}
\ee
i.e. $F'$ has to be corrected by a term that vanishes on the constraint surface 
of the simplicity constraint, however, its variation does not necessarily vanish on that 
constraint surface. 
It follows that $\delta F/\delta \pi^{aIJ}=\Gamma_{aIJ}+\tilde{S}_{aIJ}$ for some  $\tilde{S}_{aIJ}$
which vanishes on the constraint surface of the simplicity constraint provided that
\be \label{2.39}
\int d^Dx \, n^{[I} E^{a|J]} \delta\Gamma_{aIJ}[E]
=\int d^Dx\sqrt{\det(q)} n^{[I} e^{a|J]} \delta\Gamma_{aIJ}[E]=0 \text{.}
\ee
This is the key identity that one has to prove. It is the counterpart to the key identity that 
is responsible for the fact that the Ashtekar connection is Poisson commuting
in $D+1=4$. The reason for the correction $F'\to F$ is that $\Gamma_{aIJ}[\pi]$ 
is not strictly integrable but only modulo terms that vanish on the constraint surface of
the simplicity constraint.\\
\\
We proceed with the proof of (\ref{2.39}). 
It is easiest to use (\ref{2.27}) -- (\ref{2.29}). 
We have, using $n_K\delta n^K=0,\;n_K \delta e_b^K=-e_b^K\delta n_K$ and 
that $\bar{\Gamma}_{a(bc)}=0$,
\ba \label{2.40}
n^{[I} e^{a|J]} \delta(2 n_{[I} \bar{\Gamma}_{a|J]}) 
&=& 
2 n^{[I} e^{a|J]}  [n_I  (\delta\bar{\Gamma}_{aJ}) + \bar{\Gamma}_{aJ} \delta n_I)]
\nonumber\\
&=& 
\zeta e^{aI} (\delta\bar{\Gamma}_{aI}) = - e^{aI}\delta(n_J(\partial_a e_b^J) e^b_I) 
\nonumber\\
&=& e^{aI}\delta(e_b^J (\partial_a n_J) e^b_I)= e^{aI}\delta(\bar{\eta}^J_I \partial_a n_J)
\nonumber\\
&=& e^{aI} \nabla_a (\delta n_I) \text{,}
\nonumber\\
n^{[I} e^{a|J]} \delta\bar{\Gamma}_{aIJ}
&=& n^{I} e^{aJ} \delta(\bar{\Gamma}_{abc} e^b_I e^c_J)
\nonumber\\
 &=& n^{I} e^{aJ} \bar{\Gamma}_{abc} e^c_J  (\delta e^b_I)
 =q^{ac} \bar{\Gamma}_{acb} e^b_I  (\delta n^I)
\nonumber\\
 &=& - [e^a_J (\partial_a e_b^J)-\Gamma^a_{ab}] e^b_I (\delta n^I) \nonumber \\
 &=&- [e^a_J[ \partial_a(e^b_I  e_b^J)-e_b^J (\partial_a e^b_I)-\Gamma^a_{ab} e^b_I]\;(\delta n^I) 
 \nonumber\\
 &=&-  [e^a_J\partial_a(\bar{\eta}^J_I)-(\nabla_a e^a_I)](\delta n^I)
 = [\nabla_a e^a_I]\; [\delta n^I] \text{,}
 \ea
 where $\nabla_a$ is the torsion free covariant differential annihilating $q_{ab}$ (it acts 
 only on tensor indices, not on internal ones). We conclude
 \be \label{2.41a}
 \int\; d^Dx \; n^{[I} E^{a|J]}\; \delta\Gamma_{aIJ}[E]
 = \int\; d^Dx\; \sqrt{\det(q)} \nabla_a [e^a_I \delta n^I]= \int\; d^Dx \partial_a (E^a_I\delta n^I)
 =0
 \ee
 for suitable boundary conditions on $E^a_I$ and its variations\footnote{For instance
 one could impose that $n_I$ deviates from a constant by a function of rapid decrease 
 at spatial infinity.}.\\
 \\
 We therefore have established:
 \begin{Theorem} \label{th2.2} ~\\
 There exists a functional $F[\pi]$ such that for $\delta n^I$ vanishing sufficiently fast at 
 spatial infinity, we have 
 \be \label{2.41b}
 \delta F[\pi]/\delta\pi^{aIJ}(x)=\Gamma_{aIJ}[\pi;x)+S_{aIJ}[\pi;x) \text{,}
 \ee
 where $S_{aIJ}$ vanishes on the constraint surface of the simplicity constraint, depending
 at most on its first partial derivatives and 
 $\Gamma_{aIJ}[\pi]$ is the hybrid connection (\ref{2.35}).
 \end{Theorem}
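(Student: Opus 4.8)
The plan is to exhibit the functional $F[\pi]$ explicitly and then verify that its variational derivative reproduces $\Gamma_{aIJ}[\pi]$ up to a term vanishing on the simplicity constraint surface. First I would take as candidate the functional $F=F'-\int d^Dx\,\pi^{aIJ}S'_{aIJ}$ defined in \eqref{2.38}, where $F'$ is the ``radial'' primitive \eqref{2.36} built by integrating the homogeneous rational one-form $\Gamma_{aIJ}[\pi]$ along the straight path $t\mapsto t\pi$ (the footnote after \eqref{2.35} justifies that $F'$ is the only candidate for a primitive of $\Gamma[\pi]$, using that $\Gamma$ is homogeneous and vanishes at $\pi=0$). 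The reason a bare $F'$ will not do is that $\Gamma_{aIJ}[\pi]$ is obtained from $\Gamma_{aIJ}[E]$ only by replacing $\approx$ with $=$ in \eqref{2.31}, so it fails to be a closed one-form off the constraint surface; the correction term repairs exactly this defect.

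Next I would carry out the variation of $F$ with respect to $\pi^{aIJ}$, which is the computation already displayed in \eqref{2.37}. The key algebraic point there is to split $\delta\Gamma_{aIJ}[\pi]=\delta\Gamma_{aIJ}[E]+\delta S'_{aIJ}$ and to use that $S^{aIJ}:=\pi^{aIJ}-2n^{[I}E^{a|J]}$ and $S'_{aIJ}:=\Gamma_{aIJ}[\pi]-\Gamma_{aIJ}[E]$ both vanish on the simplicity constraint surface. After reorganising, all terms involving $S$ or $S'$ collect into something whose contribution to $\delta F/\delta\pi^{aIJ}$ vanishes on that surface and depends at most on first derivatives of $\pi$; what remains to be shown is precisely the key identity \eqref{2.39}, namely that $\int d^Dx\,n^{[I}E^{a|J]}\delta\Gamma_{aIJ}[E]$ vanishes for variations $\delta n^I$ decaying suitably at spatial infinity. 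This is the higher-dimensional analogue of the identity responsible for Poisson self-commutativity of the Ashtekar connection in $D+1=4$.

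To establish \eqref{2.39} I would use the decomposition \eqref{2.27}--\eqref{2.29} of $\Gamma_{aIJ}$ into its transversal part $\bar\Gamma_{aIJ}=\bar\Gamma_{abc}e^b_Ie^c_J$ and its normal part $2n_{[I}\bar\Gamma_{a|J]}$ with $\bar\Gamma_{ab}=-\zeta n_I\partial_a e_b^I$, and vary each piece separately, as in \eqref{2.40}. The essential inputs are $D_an^I=0$ (established just before \eqref{2.27} from $n_In^I=\zeta$ and $n^ID_ae_b^I=0$), the Leibniz-type identities $n_K\delta n^K=0$, $n_K\delta e_b^K=-e_b^K\delta n_K$, and the tracelessness $\bar\Gamma_{a(bc)}=0$. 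Carrying the variation through, the normal part contributes $e^{aI}\nabla_a(\delta n_I)$ while the transversal part contributes $[\nabla_a e^a_I]\,\delta n^I$; adding these and multiplying by $\sqrt{\det(q)}$ produces a total covariant divergence $\sqrt{\det(q)}\,\nabla_a(e^a_I\delta n^I)=\partial_a(E^a_I\delta n^I)$, which integrates to zero under the stated fall-off conditions, giving \eqref{2.41a}.

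The main obstacle is the identity \eqref{2.39}: it is not merely bookkeeping but relies on the precise structure of the hybrid spin connection, in particular on the fact that the extra term $2n_{[I}\bar\Gamma_{a|J]}$ present in the hybrid (as opposed to an ordinary $D$-bein) spin connection conspires with the transversal part so that the two variations combine into an exact divergence rather than leaving a residual bulk term. Once \eqref{2.39} is in hand, assembling the conclusion is immediate: reading off $\delta F/\delta\pi^{aIJ}$ from the reorganised \eqref{2.37} gives $\Gamma_{aIJ}[\pi]+S_{aIJ}[\pi]$ with $S_{aIJ}$ supported on the simplicity constraint surface and depending at most on first derivatives, which is exactly the statement of Theorem~\ref{th2.2}. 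I would also remark in passing that the boundary term vanishes, e.g., if $n_I$ approaches a constant modulo a function of rapid decrease at spatial infinity, matching the footnote.
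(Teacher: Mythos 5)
Your proposal is correct and follows essentially the same route as the paper: the radial primitive $F'$ of \eqref{2.36}, the corrected functional $F$ of \eqref{2.38}, the variation \eqref{2.37} organised around $S^{aIJ}$ and $S'_{aIJ}$, and the reduction to the key identity \eqref{2.39}, which you establish exactly as in \eqref{2.40}--\eqref{2.41a} via the decomposition \eqref{2.27}--\eqref{2.29} and the resulting total divergence. No gaps to report.
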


 \section{New Variables and Equivalence with ADM Formulation}
 \label{s3}
 
 We consider an $G=\text{SO}(D+1)$ or $G=\text{SO}(1,D)$ canonical gauge theory over $\sigma$ with
 connection $A_{aIJ}$ and conjugate momentum $\pi^{aIJ}$. These variables are 
 subject to the canonical brackets
 \be \label{3.1}
 \{A_{aIJ}(x),\pi^{bKL}(y)\}=2 \beta \delta_a^b\delta_{[I}^K\delta_{J]}^L\delta^{(D)}(x-y),\;\;
 \{A_{aIJ}(x),A_{bKL}(y)\}=\{\pi^{aIJ}(x),\pi^{bKL}(y)\}=0\text{,}
 \ee
 as well as to the Gau{\ss} constraint
 \be \label{3.2}
 G^{IJ}:={\cal D}_a \pi^{aIJ}=\partial_a \pi^{aIJ}+2 A_a^{[I}\;_K\;\pi^{a|K|J]}
 \ee
 and the simplicity constraint
 \be \label{3.3}
 S^{ab}_{\overline{M}}=\frac{1}{4} \epsilon_{IJKL\overline{M}} \pi^{aIJ}\pi^{bKL} \text{.}
 \ee
 Internal indices as before are moved by the internal metric $\eta$ which is just 
 the Euclidean metric for SO$(D+1)$ ($\zeta=1$) and the Minkowski metric for SO$(1,D)$ ($\zeta=-1$).
 We have for $g\in \text{SO}(\zeta,D)$ that $g^{IJ} g^{KL}\eta_{KL}=\eta^{IJ},\;
 \det((g^{IJ}))=1$.
 The covariant differential ${\cal D}_a$ of $A$ acts only on internal indices. This 
 does not affect the tensorial character of (\ref{3.2}) because $\pi^{aIJ}$ is a Lie algebra 
 valued vector density of weight one and (\ref{3.2}) is its covariant divergence which is 
 independent of the Levi-Civita connection. The real parameter $\beta\not=0$ in (\ref{3.1}) 
 is similar to, but structurally different from the Immirzi parameter in $D=3$.\\
 \\
 Let $\Gamma_{aIJ}[\pi]$ be the hybrid connection (\ref{2.35}) constructed from $\pi$. 
 We define a map from this Yang-Mills theory phase space with coordinates 
 $(A_{aIJ},\pi^{aIJ})$ to the coordinates $(q_{ab},P^{ab})$ of the ADM phase space
 by the following formulas
 \ba \label{3.4}
 \det(q) q^{ab} &:=& \frac{1}{2\zeta} \pi^{aIJ} \; \pi^b\;_{IJ} \text{,}
 \\
 P^{ab} &:=& \frac{1}{4 \beta} \left(
 q^{a[c} [A_{cIJ}-\Gamma_{cIJ}]\pi^{b]IJ}
 +q^{b[c} [A_{cIJ}-\Gamma_{cIJ}]\pi^{a]IJ} \right) \nonumber \\
 &=& \frac{1}{2 \beta} q^{d(a} \;[A_{cIJ}-\Gamma_{cIJ}] \pi^{[b)IJ} \delta^{c]}_d \text{.}
 \nonumber
 \ea
 The central result of this section is:
 \begin{Theorem} \label{th3.1} ~\\
 i. Gau{\ss} and simplicity constraints obey a first class constraint algebra.\\
 ii. The symplectic reduction of the Yang-Mills phase space defined above with respect to 
 Gau{\ss} and simplicity constraints coincides with the ADM phase space. More in detail,
 the functions $q_{ab}[\pi],\;P^{ab}[A,\pi]$ defined in (\ref{3.4}) are Dirac observables with 
 respect to Gau{\ss} and simplicity constraints and obey the standard Poisson brackets
 \be \label{3.5}
 \{q_{ab}(x),P^{cd}(y)\}=\delta_{(a}^c \delta_{b)}^d\;\delta^{(D)}(x-y),\;\;
 \{q_{ab}(x),q_{cd}(y)\}=\{P^{ab}(x),P^{cd}(y)\}=0
 \ee
 on the constraint surface defined by simplicity and Gau{\ss} constraints.
 \end{Theorem}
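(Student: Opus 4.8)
\emph{Part (i): the constraint algebra.} The three brackets $\{G,G\}$, $\{G,S\}$, $\{S,S\}$ can be handled separately. Since $S^{ab}_{\overline{M}}$ depends on $\pi$ alone and the momenta Poisson-commute, $\{S^{ab}_{\overline{M}},S^{cd}_{\overline{N}}\}=0$ identically. The bracket $\{G,G\}$ is the standard Yang--Mills one: smearing $G^{IJ}$ against antisymmetric $\Lambda_{IJ}$ gives $\{G[\Lambda],G[\Lambda']\}=G[[\Lambda,\Lambda']]$ with the $\mathrm{SO}(\zeta,D)$ Lie bracket, which I would read off directly from (\ref{3.1}) and (\ref{3.2}). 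For $\{G,S\}$ one uses that $G[\Lambda]$ generates the infinitesimal rotation $\delta_\Lambda\pi^{aIJ}=2\Lambda^{[I}{}_{K}\pi^{a|K|J]}$ and that $\epsilon_{IJKL\overline{M}}$ is an $\mathrm{SO}(\zeta,D)$-invariant tensor; hence $\delta_\Lambda S^{ab}_{\overline{M}}$ is again a linear combination of the $S^{ab}_{\overline{N}}$ (a rotation acting on the free-index block $\overline{M}$), so $\{G[\Lambda],S^{ab}_{\overline{M}}[f]\}$ closes on the simplicity constraint. Thus the set $\{G,S\}$ is first class.

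\emph{Part (ii), step 1: Dirac observable property.} Next I would show that $q_{ab}[\pi]$ and $P^{ab}[A,\pi]$ weakly commute with $G$ and $S$. For $q_{ab}$ this is immediate: it is a function of $\pi$ alone built from the invariant tensor $\eta$, hence strongly commutes with $S$ and is manifestly gauge invariant. For $P^{ab}$ the key input is that $K_{aIJ}:=A_{aIJ}-\Gamma_{aIJ}[\pi]$ transforms homogeneously (in the adjoint) under $G$, because $A$ is a connection and $\Gamma_{aIJ}[\pi]$ as given by (\ref{2.35}) is built so that its inhomogeneous pieces --- assembled from the $\partial_a$ in the $T_{aIJ}$ term --- reproduce precisely the connection transformation law; granting this, $q^{ab}$ and the contraction $K_{cIJ}\pi^{bIJ}$ are gauge invariant, so $\{G[\Lambda],P^{ab}\}\approx 0$. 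For the simplicity constraint I would use $\{\Gamma_{aIJ}[\pi],S^{cd}_{\overline{M}}\}=0$ and $\{\pi^{aIJ},S^{cd}_{\overline{M}}\}=0$, so that $\{K_{cIJ}\pi^{bIJ},S^{ef}_{\overline{M}}\}=\pi^{bIJ}\{A_{cIJ},S^{ef}_{\overline{M}}\}$; since $\partial S^{ef}_{\overline{M}}/\partial\pi^{cIJ}\propto\delta^{(e}_{c}\,\epsilon_{IJKL\overline{M}}\,\pi^{f)KL}$, contracting with $\pi^{bIJ}$ reproduces the simplicity constraint itself (up to the placement of the free indices), hence vanishes weakly. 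Together with $\{q^{ab},S\}=0$ this yields $\{P^{ab},S^{cd}_{\overline{M}}\}\approx 0$, so both $q_{ab}$ and $P^{ab}$ are Dirac observables for $(G,S)$.

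\emph{Part (ii), step 2: the induced Poisson algebra.} I would then compute the three ADM brackets. $\{q_{ab},q_{cd}\}=0$ holds strongly because both depend only on $\pi$. For $\{q_{ab},P^{cd}\}$ only the $A$-dependence of $P^{cd}$ contributes; I would obtain $\delta q_{ab}/\delta\pi^{eIJ}$ by varying $\det(q)q^{ab}=\tfrac{1}{2\zeta}\pi^{aIJ}\pi^{b}{}_{IJ}$ and solving for $\delta q_{cd}$, and contract it against $\delta P^{cd}/\delta A_{fKL}\propto\tfrac{1}{4\beta}q^{c[f}\pi^{d]KL}+\dots$; after inserting (\ref{3.1}) this is the same algebraic identity that occurs in the $D=3$ case and collapses to $\delta^{c}_{(a}\delta^{d}_{b)}\delta^{(D)}(x-y)$. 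The hard bracket is $\{P^{ab},P^{cd}\}$, and here I would exploit Theorem \ref{th2.2}: writing $\Gamma_{aIJ}[\pi]=\delta F/\delta\pi^{aIJ}-S_{aIJ}$ and setting $\tilde A_{aIJ}:=A_{aIJ}-\delta F/\delta\pi^{aIJ}$, we get $A_{aIJ}-\Gamma_{aIJ}[\pi]=\tilde A_{aIJ}+S_{aIJ}$, where the point transformation $A\mapsto\tilde A$ is canonical (a gradient shift in $\pi$), so that $(\tilde A,\pi)$ is again a canonical pair and, crucially, $\{\tilde A_{aIJ}(x),\tilde A_{bKL}(y)\}=0$ by symmetry of the second functional derivative of $F$. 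Substituting $A-\Gamma\to\tilde A+S$ into $P^{ab}$ costs only the $S$-term, which vanishes on the simplicity surface, and modulo this correction $\{P^{ab},P^{cd}\}$ reduces to a self-commutation computation structurally identical to the one performed for $D=3$ --- a tracefree weight-one combination of $q^{\cdot\cdot}$ and $\tilde A\!\cdot\!\pi$ with $\tilde A$ Poisson self-commuting --- which I would carry through to get $\{P^{ab},P^{cd}\}\approx 0$. Finally, the dimension count for the reduced Yang--Mills phase space, $D^2(D+1)-2\big[D(D+1)/2+(D^2(D-1)/2-D)\big]=D(D+1)$ per point, matches that of the ADM pair, and together with the functional independence of $q_{ab}[\pi]$ this shows $(q,P)$ coordinatize the symplectic reduction, completing (ii).

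\emph{Main obstacle.} The principal difficulty is the bracket $\{P^{ab},P^{cd}\}\approx 0$: one must verify that \emph{every} correction term generated by the substitution $A-\Gamma[\pi]\to\tilde A+S$ and by the $\pi$-derivatives hidden in $\Gamma[\pi]$ is genuinely proportional to the simplicity (or Gau{\ss}) constraint, not merely numerically small, while also controlling the index bookkeeping of the tracefree weight-one expression (\ref{3.4}). A secondary technical point that must be nailed down is the claim invoked in the gauge-invariance step --- that $\Gamma_{aIJ}[\pi]$ from (\ref{2.35}) transforms as an $\mathrm{SO}(\zeta,D)$ connection \emph{off} the constraint surface --- which requires checking that the inhomogeneous parts of $\delta_\Lambda T_{aIJ}$, $\delta_\Lambda\bar T_{aIJ}$ and $\delta_\Lambda(\Gamma^{b}_{ac}T^{c}_{bIJ})$ combine with the coefficients in (\ref{2.35}) to give exactly $\partial_a\Lambda_{IJ}$; on the constraint surface this is automatic since $\Gamma[\pi]$ is then the hybrid spin connection defined by (\ref{2.6}).
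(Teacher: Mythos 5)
Your proposal is correct and follows essentially the same route as the paper: part (i) and the weak invariance of $q_{ab},P^{ab}$ are argued exactly as in the text (including the key computation that $\pi^{bIJ}\{A_{cIJ},S^{ef}_{\overline{M}}\}$ is itself proportional to the simplicity constraint), and your canonical shift $\tilde A_{aIJ}=A_{aIJ}-\delta F/\delta\pi^{aIJ}$ is just a repackaging of the paper's use of the integrability theorem for the hybrid connection together with the symmetry of second functional derivatives to eliminate the $\{A,\Gamma\}$ terms in $\{P^{ab},P^{cd}\}$. The two points you leave implicit are resolved in the paper by noting that $\Gamma_{aIJ}[\pi]$ need only transform as a connection \emph{modulo} the simplicity constraint (the strong off-shell statement you assume is not needed), and that the residual cross terms in $\{P^{ab},P^{cd}\}$ are proportional to $K^{[ab]}$, which is shown to vanish on the joint Gau{\ss}--simplicity surface.
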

 \begin{proof} ~\\
 i.\\ 
 Since $S^{ab}_{\overline{M}}$ only depends on $\pi^{aIJ}$, it Poisson commutes with itself.
 The Gau{\ss} constraint of course generates $G$ gauge transformations under which 
 $A$ transforms as a connection and $\pi$ as a section in an associated vector bundle 
 under the adjoint representation of $G$. The Poisson algebra of the smeared Gau{\ss} constraints
 is therefore (anti-)isomorphic with the Lie algebra of $G$
 \be \label{3.6}
 \{G[f],G[f']\}=-\beta G[[f,f']],\;G[f]:=\int\; d^Dx\; \frac{1}{2} f_{IJ} G^{IJ},\;\;[f,f']_{IJ}=2f_{[I}\;^K f'_{|K|J]} \text{.}
 \ee
 Under finite Gau{\ss} transformations we have 
 \be \label{3.7}
 \pi^{aIJ}\mapsto [g \pi^a g^{-1}]^{IJ} \text{.}
 \ee
 Since $G=\text{SO}(\zeta,D)$ is unimodular we obtain
 \be \label{3.8}
 S^{ab}_{\overline{M}}\mapsto \zeta \; g_{\overline{M}}\;^{\overline{N}} S^{ab}_{\overline{N}},\;\;\;\;\;
 g_{\overline{M}}\;^{\overline{N}}=\prod_{i=1}^{D-3} \; g_{M_i}\mbox{}^{N_i} \text{.}
 \ee
 It follows the first class structure $\{G,G\}\propto G,\;\; \{G,S\}\propto S,\;\; \{S,S\}=0$.\\
 ii.\\
The hybrid spin connection $\Gamma_{aIJ}[E]$ is a $G$ connection by construction. 
Its extension $\Gamma_{aIJ}[\pi]$ off the simplicity constraint surface therefore 
transforms as a $G$ connection modulo the simplicity constraint. 
 Since both $\pi^{aIJ},\;K_{aIJ}:=\frac{1}{\beta}(A_{aIJ}-\Gamma_{aIJ})$ transform in the adjoint representation
 of $G$ it is clear that $Q^{ab}\propto {\rm Tr}(\pi^a \pi^b),\;\;K_a^b\propto {\rm Tr}(K_a \pi^b)$
 are in fact Gau{\ss} invariant, possibly modulo the simplicity constraint, and thus are $q_{ab},\;P^{ab}$. Since $S^{ab}_{\overline{M}}$ and 
 $q_{ab}$ are both constructed from $\pi^{aIJ}$ alone it is clear that they strictly Poisson 
 commute. As for $P^{ab}$ we notice that it is a linear combination of the objects 
 \be \label{3.9}
 K_a\;^b:=- \frac{s}{4\beta}\;[A_{aIJ}-\Gamma_{aIJ}] \pi^{bIJ}\text{,}
 \ee
 with coefficients that depend only on $q_{ab}$. While the notation already suggests that $K_a\;^b$ is related with the extrinsic curvature, note that as it is defined here, $K_a\;^b$ has density weight one. It is therefore sufficient to show that
 $\{K_a\;^b,S^{cd}_{\overline{M}}\}\approx 0$. We compute with the smeared simplicity constraint
 and using that $\Gamma_{aIJ}[\pi]$ depends only on $\pi^{aIJ}$ 
 \ba \label{3.10}
 \{K_a\;^b(x),S[f]\} &=& 
 \int\; d^Dy\;
 f_{cd}^{\overline{M}}(y) \; \{K_a\;^b(x),S^{cd}_{\overline{M}}(y)\}
 \nonumber\\
 &=&
 - \frac{s}{16 \beta} \int\; d^Dy\;
 f_{cd}^{\overline{M}}(y)\; \pi^{bIJ}(x)\; \epsilon_{ABCD\overline{M}} \; 
 \{A_{aIJ}(x),\pi^{cAB}(y)\pi^{dCD}(y)\}
 \nonumber\\  
 &=& -s \; f_{cd}^{\overline{M}}(x) \delta_a^{(c}  S^{d) b}_{\overline{M}}(x)  \text{.}
 \ea
 It follows that $P^{ab}$ Poisson commutes with the simplicity constraint on its 
 constraint surface.\\
 \\
 It remains to verify the ADM Poisson brackets.\\ 
 Since $q_{ab}(x)$ depends only on $\pi^{aIJ}(x)$ we have trivially 
 $\{q_{ab}(x),q_{cd}(y)\}=0$. Next, using $q^{ab}=Q^{ab}/\det(q),\; \det(q)=[\det(Q)]^{1/(D-1)}$, we find
 \ba \label{3.11}
 \{q_{ab}(x),P^{cd}(y)\} 
 &=& 2s q_{ae}(x) \; q_{bf}(x) \;\{q^{ef}(x), (q^{g(c}\; K_h\mbox{}^{[d)}\delta^{h]}_g)(y)\}
 \nonumber\\
 &=& - \frac{1}{2\beta}  [q_{ae}\; q_{bf}](x) \; [q^{g(c} \pi^{[d)IJ} \delta^{h]}_g](y)
 \times \nonumber\\
 && \left[\frac{1}{\det(q)}
 \{Q^{ef}(x), A_{hIJ}(y)\}-\frac{1}{D-1} Q_{mn}(x) q^{ef}(x) \{Q^{mn}(x),A_{hIJ}(y)\} \right]
 \nonumber\\
 &=&- \frac{1}{2\beta} \delta^{(D)}(x-y)
 \; q_{ae}\; q_{bf} \; q^{g(c} \pi^{[d)IJ} \delta^{h]}_g \times \nonumber \\ 
 & & \frac{1}{2\zeta}
 \left[- \frac{4\beta}{\det(q)}\delta^{(e}_h \pi^{f)}_{IJ}
 +\frac{4\beta}{D-1} Q_{mn}(x) q^{ef} \delta^m_h \pi^n_{IJ}\right]  
 \nonumber\\
 &=& \delta^{(D)}(x-y)
 \; q_{ae}\; q_{bf} \; q^{g(c} 
 \left[q^{[d)|e|} \delta^{|f|}_h+q^{[d)|f|} \delta^{|e|}_h-\frac{2}{D-1}  q_{mn} q^{|ef|} q^{[d)|n|} 
 \delta^{|m|}_h \right]\; \delta^{h]}_g
 \nonumber\\
 &=& \delta^{(D)}(x-y) \delta^c_{(a} \delta^d_{b)}
 \ea
 by carefully contracting all indices and keeping track of the (anti)symmetrisations. \\
 The last bracket is the most complicated. We write 
 \be \label{3.12}
 P^{ab}=P^{abeIJ} K_{eIJ},\;\; P^{abeIJ}=\frac{1}{2}q^{g(a}\; \pi^{[b)IJ} \; \delta^{e]}_g
 \ee 
 and compute 
 \ba \label{3.13}
\{P^{ab}(x),P^{cd}(y)\} &=& 
P^{abeIJ}(x)\{K_{eIJ}(x),P^{cdfKL}(y)\} K_{fKL}(y)  \nonumber \\
& &-P^{cdfKL}(y)\{K_{fKL}(y),P^{abeIJ}(x)\} K_{eIJ}(x)
\nonumber\\
&& +P^{abeIJ}(x)\; P^{cdfKL}(y)\{K_{eIJ}(x),K_{fKL}(y)\}
\nonumber\\
&=& \frac{1}{2}\{P^{ab}(x),q^{h(c}(y)\} \pi^{[d)KL} \delta^{f]}_h \; K_{fKL}(y)
-\frac{1}{2}\{P^{cd}(y),q^{g(a}(x)\} \pi^{[b)IJ} \delta^{e]}_g\; K_{eIJ}(x)
\nonumber\\
&&
+\frac{1}{2\beta} P^{abeIJ}(x)\; q^{h(c}(y)\{A_{eIJ}(x),\pi^{[d)KL}(y)\} \delta^{f]}_h \; K_{fKL}(y)
 \nonumber\\
 && -\frac{1}{2\beta} P^{cdfKL}(y)\; q^{g(a}(x)\{A_{fKL}(y),\pi^{[b)IJ}(x)\} \delta^{e]}_g\; K_{eIJ}(x)
\nonumber\\
&&
 -\frac{1}{\beta^2} P^{abeIJ}(x) \; P^{cdfKL}(y)\;[
 \{A_{eIJ}(x),\Gamma_{fKL}(y)\}-\{A_{fKL}(y),\Gamma_{eIJ}(x)\}]
\nonumber\\
&=& 2s \left[q^{h(a} q^{b)(c }K_f^{[d)} \delta^{f]}_h 
-q^{g(c} q^{d)(a} K_e^{[b)} \delta^{e]}_g \right] \delta^{(D)}(x-y)
\nonumber\\
&&
-2s \left[\; q^{h(c} \delta_e^{[d)} \delta^{f]}_h \; q^{g(a} K_f^{[b)} \delta^{e]}_g
 -q^{g(a}\; \delta_f^{[b)} \delta^{e]}_g\;  q^{h(c} K_e^{[d)} \delta^{f]}_h \right]\delta^{(D)}(x-y)
\nonumber\\
&&
 -\frac{1}{\beta^2} P^{abeIJ}(x) \; P^{cdfKL}(y)\;[
 \{A_{eIJ}(x),\Gamma_{fKL}(y)\}-\{A_{fKL}(y),\Gamma_{eIJ}(x)\}] \text{.}
\ea
By carefully carrying out the contractions, it is not difficult to see that the first two square 
brackets in the last equality are each proportional to 
\be \label{3.14}
q^{ad} K^{[bc]}+q^{bd} K^{[ac]}+q^{ac} K^{[bd]}+q^{bc} K^{[ad]},\;\;K^{ab}:=q^{ac} K_c\;^b \text{.}
\ee
We claim that $K^{[ab]}$ is constrained to vanish by the Gau{\ss} constraint. To see this, 
let ${\cal D}'_a$ be the covariant differential of $A$ which acts also on tensor indices and 
let $D_a$ the covariant differential that kills the generalised $D$-bein. Then the Gau{\ss} 
constraint reads
\be \label{3.15}
G^{IJ}={\cal D}_a \pi^{aIJ}={\cal D}'_a \pi^{aIJ}=[{\cal D}'_a-D_a] \pi^{aIJ}+D_a \pi^{aIJ}
\approx [(A-\Gamma)_a,\pi^a]^{IJ}=\beta[K_a,\pi^a]^{IJ} \text{,}
\ee
where we used that on the constraint surface of the simplicity constraint we have 
$D_a \pi^{bKL}=2 D_a n^{[K} E^{a|L]} =0$. With the convention $K_{aI}:= - \zeta K_{aIJ} n^J$ we 
obtain for the Gau{\ss} constraint
\ba
G_{IJ} &=&2\beta K_{aL[I} \pi^a\;_{J]}\;^L \approx 2\beta
K_{aL[I} (n_{J]}E^{aL} - E^a_{J]} n^L ) \nonumber \\
 &=& -2 \zeta \beta K_{a[I} E^a_{J]} +2\beta K_{[I} n_{J]}=:\bar{G}_{IJ}+2  n_{[I} G_{J]}   \text{,} \label{3.16}
\ea
where $K_I=E^{aL} K_{aLI}$ is the trace part of $K_{aIJ}$. It follows that $\bar{K}_I=0$ and 
$K_{a[I} E^a_{J]}=0$ on the Gau{\ss} constraint surface. Now 
\ba 
K^{[ab]} E_{aI} E_{bJ} &\approx& -\frac{s\zeta}{2} q^{c[a} K_{cL} E^{b]L} E_{aI} E_{bJ}
= -\frac{s\zeta}{2} q^{ca} K_{cL} E^{bL} E_{a[I}  E_{bJ]} \nonumber \\ &=& -\frac{s\zeta}{2 \det{(q)}} K_{cL} \bar{\eta}^L_{[J} E^c_{I]}
= \frac{s\zeta}{2\det{(q)}} K_{a[I} E^a_{J]} \text{.} \label{3.17}
\ea
Therefore $K^{[ab]}=[K^{[cd]} E_{cI} E_{cJ}] E^{aI} E^{bJ}$ vanishes on the Gau{\ss} constraint surface. 

We now turn to the last square bracket in (\ref{3.13}). It is a linear combination, with coefficients
$M$ depending on $q_{ab}$, of expressions of the form
\be \label{3.18}
M^{abe}_f(x) M^{cdg}_h(y)
\pi^{fIJ}(x) \pi^{hKL}(y) \left[\{A_{eIJ}(x),\Gamma_{gKL}(y)\}-\{A_{gKL}(y),\Gamma_{eIJ}(x)\} \right] \text{.}
\ee
We now invoke the key result of the previous section and write 
$\Gamma_{aIJ}=\delta F/\delta\pi^{aIJ}+S_{aIJ}$ where $S_{aIJ}$ vanishes on the 
constraint surface of the simplicity constraint and depends at most on its first partial 
derivatives. It is therefore given by an expression of the form
\be \label{3.19}
S_{gKL}=\lambda_{gKLgmn}^{\overline{M}} S^{mn}_{\overline{M}}+\mu_{gKLmn}^{\overline{M}p}
\partial_p S^{mn}_{\overline{M}}
\ee
for certain coefficients $\lambda,\mu$.
First of all, we notice that due to the commutativity of partial functional derivatives
\be \label{3.20}
\{A_{eIJ}(x),\delta F/\delta\pi^{gKL}(y)\}-\{A_{gKL}(y),\delta F/\delta\pi^{eIJ}(x)\}=0 \text{.}
\ee
Next, due to the derivatives involved, the Poisson bracket is not ultralocal, however, what we intend 
to prove is that $\{P[f],P[f']\}\approx 0$ with the smeared functions $P[f]=\int d^Dx f_{ab} P^{ab}$.
Let $M^e_f=f_{ab} M^{abe}_f,\;M^{\prime g}_h=M^{cdg}_h f'_{cd}$, then the contribution 
from $S_{gKL}$ in the first term of (\ref{3.18}) becomes after smearing
\begin{alignat}{3} \label{3.21}
\approx&\; 
\int\; d^Dx\;\int\; d^Dy\; M^e_f(x) \pi^{fIJ}(x) 
\left( M^{\prime g}_h \pi^{hKL} \lambda_{gKLmn}^{\overline{M}}-[M^{\prime g}_h \pi^{hKL}
\mu_{gKLmn}^{\overline{M}p}]_{,p}\right)\; \{A_{eIJ}(x),S^{mn}_{\overline{M}}(y)\}
\nonumber\\
=&\; 4\beta
\int\; d^Dx\; M^e_f \pi^{fIJ}(x) 
\left( M^{\prime g}_h \pi^{hKL} \lambda_{gKLmn}^{\overline{M}}-[M^{\prime g}_h \pi^{hKL}
\mu_{gKLmn}^{\overline{M}p}]_{,p}\right)\; 
\delta_e^{(m} S^{n)f}_{\overline{M}}(x)
\nonumber\\
\approx &\; 0 \text{,}
\end{alignat}
where (\ref{3.10}) was used.
The calculation for the second term is similar. In conclusion, $\{P^{ab}(x),P^{cd}(y)\}$
vanishes on the joint constraint surface of the Gau{\ss} and the simplicity constraint. 
\end{proof}

\section{ADM Constraints in Terms of the New Variables}
\label{s4}

It remains to express the ADM constraints in terms of the new variables. Of course we could 
just substitute for the expressions (\ref{3.4}), however, this is not the most convenient form 
for the ADM constraints because they involve the hybrid connection which is a complicated 
expression in terms of $\pi$. We will therefore adopt the strategy familiar from $D+1=4$ and 
invoke the curvature $F$ of $A$. In the end, we will arrive at expressions $\mathcal{H}_a,\mathcal{H}$ for spatial diffeomorphism and Hamiltonian constraint which differ 
from their counterparts $\mathcal{H}'_a,\mathcal{H}'$, obtained by naive substitution of $q_{ab},P^{ab}$ 
by (\ref{3.4}) in (\ref{2.2}), (\ref{2.3}), by terms proportional to Gau{\ss} 
and simplicity constraints. This guarantees that the algebra of Gau{\ss}, simplicity, spatial 
diffeomorphism an Hamiltonian constraints is of first class:\\ 
To see this, let us write 
$\mathcal{H}_a=\mathcal{H}'_a+Z_a,\;\mathcal{H}=\mathcal{H}'+Z$ where $Z_a,Z$ vanish 
on the constraint surface of the simplicity and Gau{\ss} constraint. 
We have seen already that $\{S,S\}=0,\{G,S\}\propto S,\;\{G,G\}\propto G$. We also have shown 
that (\ref{3.4}) are weak Dirac observables with respect to $S$ and invariant under $G$. Since $\mathcal{H}'_a,\mathcal{H}'$ are defined in 
terms of (\ref{3.4}) it follows that $\{S,\mathcal{H}'_a\}\propto S, \{S,\mathcal{H}'\}\propto S$. Altogether therefore 
$\{S,\mathcal{H}_a\}, \; \{S,\mathcal{H}\},\; \{G,\mathcal{H}_a\},\;\{G,\mathcal{H}\} \propto S,G$ thus $S,G$ form an ideal.
Next we have $\{\mathcal{H}'_a,\mathcal{H}'_b\}\propto \mathcal{H}'_c,S,G,\;\{\mathcal{H}'_a,\mathcal{H}'\}\propto \mathcal{H}',S,G,\; \{\mathcal{H}',\mathcal{H}'\}\propto
\mathcal{H}'_a,S,G$ because the algebra of the variables (\ref{3.4}) is the same as that of the ADM variables 
modulo $S,G$ terms and therefore the algebra of the ADM constraints is reproduced modulo 
$S,G$ terms. Together with what was already said, this implies that $\mathcal{H}_a,\mathcal{H}$ reproduce the 
ADM algebra of constraints modulo $S,G$ terms.\\
\\
We begin by deriving the relation between the hybrid curvature
\be \label{4.1}
R_{abIJ}=2\partial_{[a}\Gamma_{b]IJ}+\Gamma_{aIK}\Gamma_b\;^K\;_J-\Gamma_{aJK}
\Gamma_b\;^K\;_I
\ee
and the  Riemann curvature of $q_{ab}$. Let $\nabla_a$ be the covariant derivative compatible with $q_{ab}$. Then we have by definition $D_a e_b^I=\nabla_a e_b^I+\Gamma_a\,^I\,_J e_b^J=0$.
Expanding out $D=\nabla+\Gamma$ in the commutator relation $[D_a,D_b] e_c^I=0$ and using
$[\nabla_a,\nabla_b]e_c^I=R_{abc}\;^d e_d^I$ we find
\be \label{4.2}
R_{abc}\;^d e_d^I+ R_{ab}\;^I\;_J e_c^J=0\;\;\Rightarrow\;\; R_{abcd}=R_{abIJ} e^I_c e^J_d \text{.}
\ee
This relation looks familiar from the spin connection, but we stress $\Gamma_{aIJ}$ is 
not the spin connection because $e_a^I$ is not a $D$-bein.
We  obtain modulo $S$ for the Ricci scalar
\be \label{4.3}
R_{abIJ} \pi^{aIK} \pi^b\;_K\;^J\approx R_{abIJ}[ n^I E^{aK}  -  n^K E^{aI}][ n_K E^{bJ} - n^J E^b_K]
=-\zeta \det(q) R \text{.}
\ee
Next, using (\ref{4.2}) 
\be \label{4.4}
R_{abIJ} \pi^{bIJ}\approx 2 R_{abIJ}  n^I E^{bJ}= 2 q^{bc}\sqrt{\det(q)} R_{abIJ} n^I e_c^J
= - 2 q^{bc} \sqrt{\det(q)} R_{abc}\;^d e_{dI} n^I=0 \text{,}
\ee
which is the analog of the algebraic Bianchi identity. \\
\\ 
We now expand the curvature 
\be \label{4.5}
F_{abIJ}:=2\partial_{[a}A_{b]IJ}+A_{aIK} \;A_b\;^K\;_J-A_{aJK}\; A_b\;^K\;_I
\ee
of $A=\Gamma+\beta K$ in terms of $\Gamma, K$ and obtain
\be \label{4.6}
F_{abIJ}=R_{abIJ}+2\beta D_{[a} K_{b]IJ}+2\beta^2 K_{[aIK}\; K_{b]}\;^K\;_{J} \text{,}
\ee
where torsion freeness of $\nabla=D-\Gamma$ was employed. Contracting (\ref{4.6})
with $\pi^{bIJ}$ we find using (\ref{4.4})
\be \label{4.7}
F_{abIJ} \pi^{bIJ}\approx  2\beta (D_{[a} K_{b]IJ}) \pi^{bIJ} - \beta^2 {\rm Tr}([K_a,K_b] \pi^b) \text{.}
\ee
The second term is proportional to the Gau{\ss} constraint because 
${\rm Tr}([K_a,K_b] \pi^b)={\rm Tr}(K_a[K_b,\pi^b])$ and  remembering (\ref{3.15}). 
In the first term we notice that $D_a \pi^{bIJ}\approx 0$ so that 
\be \label{4.8}
F_{abIJ} \pi^{bIJ}\approx -8s\beta D_{[a} K_{b]}^b= 4s\beta D_b[K_a\;^b-\delta_a^b K_c\;^c]
=-4\beta D_b P_a\;^b= 2\beta \mathcal{H}_a
\ee
is proportional to the spatial diffeomorphism constraint modulo $S,G$. \\
\\
Next, using (\ref{4.3})
\be \label{4.9}
F_{abIJ} \pi^{aIK}\pi^b\;_K\;^J
\approx -\zeta \det(q) R+2\beta D_a{\rm Tr}(K_b [\pi^a,\pi^b])-\beta^2 {\rm Tr}([K_a,K_b] \pi^a \pi^b) \text{.}
\ee
The second term is again proportional to the Gau{\ss} constraint, since 
${\rm Tr}(K_b[\pi^a,\pi^b])=-{\rm Tr}(\pi^a[K_b,\pi^b])$. So far all the steps were similar to the 
$3+1$ situation. The difference comes in when looking at the third term in (\ref{4.9})
\ba \label{4.10}
-{\rm Tr}([K_a,K_b]\pi^a \pi^b) &\approx&
[K_{aIK} K_b\,^K\,_J-K_{bIK} K_a\,^K\,_J][n^IE^{aL} -  n^L E^{aI} ][ n^L E^{bJ}- n^J E^b_L ]
\nonumber\\
&=& -\zeta [-(K_{aIK} E^{aI})(K_{bJ}\;^K E^{bJ})+(K_{bIK} E^{aI})(K_{aJ}\;^K E^{bJ})] \text{.}
\ea
By the Gau{\ss} constraint (\ref{3.16}) we have $K_I=K_{aJI} E^{aJ}=\zeta [K_J n^J] n_I$ and 
$K_J n^J\approx - K_{aIJ} \pi^{aIJ}/2=2s K_a^a$. Thus the first term in (\ref{4.10}) is given by
$4 [K_a^a]^2$. However, the second term cannot be written in terms of $K_a^b$.
To explore the structure of the disturbing term we notice that from $\bar{K}_I=0$ we have the 
decomposition
\be \label{4.11}
K_{aIJ}=\bar{K}^T_{aIJ}+2  n_{[I}K_{a|J]},\;\;K_{aI}=-\zeta K_{aIJ} n^J \text{.}
\ee
Hence 
\ba \label{4.12}  
&& -\zeta (K_{bIK} E^{aI})(K_{aJ}\;^K E^{bJ})=
-\zeta (\bar{K}^T_{bIK} E^{aI} - K_{bI} E^{aI} n_K)(\bar{K}^T_{aJ}\;^K E^{bJ} - K_{aJ} E^{bJ} n^K)
\nonumber\\
&=& -\zeta (\bar{K}^T_{bIK} E^{aI})(\bar{K}^T_{aJ}\;^K E^{bJ})- 4 K_a^b K_b^a \text{,}
\ea
where $K_{aI} E^{bI}= - \zeta K_{aIJ} E^{bI} n^J\approx K_{aIJ} \pi^{bIJ}/(2\zeta)$ was used.
Altogether
\be \label{4.13}
-{\rm Tr}([K_a,K_b]\pi^a \pi^b)=-4[K_a^b K_b^a-(K_c^c)^2]-\zeta (\bar{K}^T_{bIK} E^{aI})
(\bar{K}^T_{aJ}\;^K E^{bJ}) \text{.}
\ee
The first term in (\ref{4.13}) has the structure that appears in the Hamiltonian constraint and 
can be written in terms of $P^{ab},q_{ab}$, however, the second term does not appear in the 
Hamiltonian constraint and must be removed. Also notice that the Ricci term has sign $-\zeta$ while
the first term has negative sign. If we are interested in Lorentzian Gravity then the relative sign
between these two terms should be negative which is not the case for the choice of 
a compact gauge group $\zeta=1$. Therefore the expression (\ref{4.9}) fails to yield the 
Hamiltonian constraint for several reasons.\\
\\
To assemble the Hamiltonian constraint without making use of $\Gamma$, the idea is to consider
covariant derivatives which give access to $A$. Using suitable algebraic combinations 
then yields the desired expressions. To that end, let again ${\cal D}_a$ be the covariant 
differential of $A$ acting only on internal indices and let ${\cal D}'_a$ be its extension by 
the Levi-Civita connection. Consider  
\be \label{4.14}
D_b\;^a:=\pi^{aK}\m_J \; ({\cal D}_b \pi^{cJL})\; \pi_{cKL}
=\pi^{aK}\m_J \; ({\cal D}'_b \pi^{cJL})\; \pi_{cKL}
-2 \pi^{aK}\m_J\; \pi_{cKL} \;\Gamma^{[c}_{bd} \pi^{d]JL} \text{.}
\ee
The second term equals modulo $S$ 
\be \label{4.15a}
-2[n^K E^a_J-n_J E^{aK} ][n_K  E_{cL} - n_L E_{cK} ] \Gamma^{[c}_{bd} \pi^{d]JL}
=-2\zeta\; E^a_J \;E_{cL} \Gamma^{[c}_{bd} \pi^{d]JL} \approx 0
\ee
and thus vanishes modulo $S$. Writing ${\cal D}'_a=[{\cal D}'_a-D_a]+D_a$ and noticing
$D_a \pi^{cJL}\approx 0$ we obtain
\ba \label{4.15b}
\pi^{aK}\;_J \; ({\cal D}_b \pi^{cJL})\; \pi_{cKL}
&\approx& \zeta \beta  E^a_J E_{cL} [K_b\;^J\;_M \pi^{cML}+K_b\;^L\;_M\;\pi^{cJM}]
\nonumber\\
&\approx& \zeta \beta E^a_J E_{cL} [K_b\;^J\;_M E^{cL} n^M-K_b\;^L\;_M E^{cJ} n^M]
\nonumber\\
&=& -\beta (D-1) E^{aJ}\; K_{bJ}=  2s\zeta\beta(D-1) K_b\;^a \text{.}
\ea
It follows that 
\be \label{4.16}
\frac{1}{(D-1)^2}[D_b\;^a \; D_a\;^b-(D_c\;^c)^2]\approx  4\beta^2 [K_b\;^a\;K_a\;^b-
(K_c^c)^2]
\ee
and thus linear combinations of (\ref{4.9}) and (\ref{4.16}) can be used in order to produce 
the correct factor in front of the term quadratic in the extrinsic curvature.\\
\\
In analogy to (\ref{4.14}), consider
\be \label{4.17}
D^{aIJ}:=\pi^{b[I}\;_K {\cal D}_b \pi^{a|K|J]}=
\pi^{b[I}\;_K {\cal D}'_b \pi^{a|K|J]}-2
\pi^{b[I}\;_K \Gamma^{[a}_{bc} \pi^{c]|K|J]} \text{.}
\ee
The second term equals modulo $S$
\be \label{4.18}
2\zeta E^{b[I} \Gamma^{[a}_{bc} E^{c]J]}=(-\zeta \Gamma^c_{bc} E^{b[I}) E^{aJ]}
\ee
and thus is pure trace. Since we intend to cancel $\bar{K}^T_{aIJ}$ we therefore consider 
instead of (\ref{4.17}) its transverse tracefree projection
\be \label{4.19}
\bar{D}_T^{aIJ}:=[P_{TT}\cdot D]^{aIJ},\;\;[P_{TT}]^{aIJ}_{bKL}=\delta^a_b \bar{\eta}^I_{[K} 
\bar{\eta}^J_{L]}-\frac{2}{D-1}E^{a[I} \bar{\eta}^{J]}_{[K} E_{bL]}\text{,}
\ee
under which (\ref{4.18}) drops out.
The projector $P_{TT}$ can be expressed purely in terms of $\pi^{aIJ}$ using  (\ref{2.31}) and 
\be \label{4.20}
E^{a[I} \bar{\eta}^{J]}_{[K} E_{bL]}\approx -\zeta \left( \pi^{aM[I} \bar{\eta}^{J]}_{[K} \pi_{bL]M}
+\delta^a_b n^{[I} \delta^{J]}_{[K} n_{L]} \right) \text{.}
\ee
We continue using again $D_a \pi\approx 0$
\ba \label{4.21}
\bar{D}_T^{aIJ} &\approx& 
\beta P_{TT}\left( \pi^{b[I|K|} \left[K_{bKL} \pi^{a|L|J]}+K_b\,^{J]}\,_L \pi^a\;_K\;^L \right] \right)
\nonumber\\
&\approx& 
-\beta \zeta\;
P_{TT}\left(E^{b[I} K_b\,^{J]}\,_L E^{aL}\right)=-\beta \zeta \; E^{b[I}\bar{K}^{J]L}_{bT} E^a_L \text{.}
\ea
Notice that the last line is indeed tracefree and transverse. 
We write (\ref{4.21}) as 
\be \label{4.22}
\bar{D}_T^{aIJ}=\frac{\beta}{4}  F^{aIJ,bKL} \;\bar{K}^T_{bKL},\;\;\;\; F^{aIJ,bKL}=4 \zeta E^{b[I} \bar{\eta}^{J][L} E^{a K]} \text{.}
\ee
The tensor $F^{aIJ,bKL}$ can be seen as bilinear form on transverse tensors of type 
$\bar{K}_{aIJ}$ and has the following inverse
\be \label{4.23}
(F^{-1})_{aIJ,bKL}= \frac{\zeta}{4}[Q_{ab}\bar{\eta}_{[K|[I} \bar{\eta}_{J]|L]}-2E_{b[I} \bar{\eta}_{J][K} E_{aL]}] \text{,}
\ee
that is $[F\cdot F^{-1}]^{aIJ}_{bKL}=\delta^a_b \bar{\eta}^I_{[K}\bar{\eta}^J_{L]}$. Using 
(\ref{2.31}) and 
\be \label{4.24}
E_{aI} E_{bJ}\approx \zeta [\pi_{aIM} \pi_{bJ} \m^M-\zeta Q_{ab} n_I n_J] \text{,}
\ee
$F^{-1}$ is completely expressed in terms of $\pi^{aIJ}$. The quadratic combination of 
$\bar{K}^T$ to be removed from (\ref{4.13}) can now be compactly written as
\ba \label{4.25}
&&E^{bI}\bar{K}^T_{bJM} E^{aJ}\bar{K}^T_{aI}\;^M=E^{b[I} \bar{\eta}^{N][M} E^{aJ]}
\bar{K}^T_{bJM} \bar{K}^T_{aIN}
\nonumber\\
&=& \frac{\zeta}{4} F^{aIN,bJM} \bar{K}^T_{aIN} \bar{K}^T_{bJM}
= 4 \frac{\zeta}{\beta^2}(F^{-1})_{aIJ,bKL} \bar{D}_T^{aIJ} \bar{D}_T^{bKL} \text{.}
\ea
We now have all the pieces we need. The appropriate Hamiltonian constraint for 
spacetime signature $s$ is displayed in (\ref{2.3}). We find 
\begin{alignat}{3} \label{4.26}
\sqrt{\det(q)} \mathcal{H} =&
\zeta \left( F_{abIJ} \pi^{aIK} \pi^b\,_K\,^J+ 4 \bar{D}_T^{aIJ}\;(F^{-1})_{aIJ,bKL}\; \bar{D}_T^{bKL}+\frac{1}{(D-1)^2}[D_b\,^a \, D_a\,^b-(D_c\,^c)^2]\right)  \nonumber\\
&  - s\frac{1}{\beta^2} \frac{1}{(D-1)^2}[D_b\,^a \, D_a\,^b-(D_c\,^c)^2] \text{.}
\end{alignat}
This expression simplifies for $s=\zeta$ and $\beta=1$ in which case the terms quadratic in 
$D_b\;^a$ precisely cancel. This is again similar to the situation in $3+1$ dimensions.
This special situation can also be obtained more directly starting from the Palatini formulation
as we will see in \cite{BTTII}.

\section{Conclusion}
\label{s5}

In this paper, we succeeded in constructing a Hamiltonian connection formulation of General
Relativity in all spacetime dimensions $D+1\ge 3$ based on the gauge group 
SO$(D+1)$ or SO$(1,D)$. In addition to the usual Gau{\ss}, spatial
diffeomorphism and Hamiltonian constraints, there are simplicity constraints that dictate
that the momentum conjugate to the connection is determined by a generalised $D$-bein.
The theory can be constructed for all four possible combinations of the internal ($\zeta$)
and spacetime ($s$) signature. This is especially attractive with an eye towards quantisation
because unique \cite{LewandowskiUniquenessOfDiffeomorphism} background independent representations of spatially diffeomorphism invariant
theories of connections with compact structure group exist in any dimension and have been studied in great detail
(see, e.g., \cite{ThiemannModernCanonicalQuantum} and references therein).  

The techniques for quantising Gau{\ss}, spatial
diffeomorphism and Hamiltonian constraint that have been developed in $3+1$ dimensions
generalise to arbitrary dimensions as we will show in \cite{BTTIII}. The 
simplicity constraint provides a challenge. A similar kind of constraint plays a prominent role in Spin Foam models 
\cite{BarrettRelativisticSpinNetworks, EngleFlippedSpinfoamVertex, FreidelANewSpin} and various proposals for its quantisation have been made. The problem 
is that the quantum simplicity constraints tend to be anomalous. This is due to the fact that 
the classically commuting $\pi^{aIJ}$ become non commuting after discretization (Spin foams) or introduction of a singular smearing (canonical approach), a property which is then shared by the corresponding operators in the quantum theory.
In \cite{BTTV} we propose some new strategies for how to make progress on this issue.
Eventually, the solution of the simplicity constraint will consist in a restriction on the
set of labels for spin network functions. 

The application of interest of the present work is of course in higher dimensional Supergravity
theories. Here we have to face two new technical challenges: For Lorentzian Supergravity 
the action is formulated in terms of a Lorentzian internal metric which would naturally
imply the choice SO$(1,D)$. Hence, in order to keep SO$(D+1)$ we must carefully disassemble
the SO$(1,D)$ Clifford algebra and reassemble it into an SO$(D+1)$ Clifford algebra which 
turns out to be possible. The second challenge is that higher dimensional Supergravity theories depend next to the Rarita-Schwinger field also on higher $p$-form fields for 
which background independent Hilbert space representations first need to be developed. 

While many interesting technical issues are not settled by our analysis, the present work 
and its continuation in the companion papers hopefully contribute to the development
of a non perturbative definition of quantum (Super)gravity in any dimension.   \\
\\
\\
\\
{\bf\large Acknowledgements}\\
NB and AT thank the German National Merit Foundation for financial support. 
The part of the research performed at the Perimeter Institute for
Theoretical Physics was supported in part by funds from the Government of
Canada through NSERC and from the Province of Ontario through MEDT.

\newpage

\begin{appendix}

\section{Independent Set of Simplicity Constraints}
\label{fixedpoint}

The result that one would like to prove is as follows:
\begin{Theorem} \label{th1} ~\\
Let $\pi^{aIJ}$ be a tensor antisymmetric in $I,J$ and $a=1,..,D;\;
I,J=1,..,D+1$ subject to the condition that for any non zero vector the $D$ vectors $\pi^a;\;\pi^{aI}:= -\zeta \pi^{aI}\,_J n^J$ are linearly independent.  
Then it is possible to construct a tensor 
$E^a_I=E^a_I[\pi]$ with the following properties:\\
Let $n^I=n^I[\pi]$ be the unique normal satisfying 
$E^a_I n^I=0,\;n^I n^J\eta_{IJ}=\zeta$ where $\zeta$ corresponds to the signature
of $\eta$. Let $\bar{\pi}^{aIJ}=\bar{\eta}^I_K\;\bar{\eta}^J_L \pi^{aKL}$
with the transversal projectors $\bar{\eta}^I_J=\delta^I_J-\zeta n^I\; n_J$.
Then $\bar{\pi}^{aIJ}=\bar{\pi}^{aIJ}_T$ is automatically tracefree 
with respect to $E$, that is, $\bar{\pi}^{aIJ} E_{aI}=0$ where $E_{aI}$ 
is uniquely defined by $E^{aI} E_{aJ}=\bar{\eta}^I_J,\;E^{aI} E_{bI}=
\delta^a_b$. Furthermore $\pi^{aIJ}=\bar{\pi}^{aIJ}+2 n^{[I}E^{a|J]} $.
\end{Theorem}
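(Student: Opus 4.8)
The plan is to reduce the entire statement to the construction of one normalised internal vector $n^I=n^I[\pi]$, from which $E^a_I$ is then read off. For \emph{any} candidate $n^I$ with $n^I n_I=\zeta$, put $E^{aI}[n]:=-\zeta\,\pi^{aIJ}n_J$; antisymmetry of $\pi$ gives $E^{aI}n_I=0$, the matrix $Q^{ab}[n]:=E^{aI}[n]E^b{}_I[n]=\pi^{aIK}\pi^{bJL}\eta_{IJ}n_K n_L$ is non-degenerate by the standing hypothesis that the $\pi^{aI}=-\zeta\pi^{aIJ}n_J$ are linearly independent, and hence $E_{aI}[n]:=Q_{ab}[n]E^b{}_I[n]$ is well defined and obeys $E^{aI}E_{aJ}=\bar\eta^I_J$, $E^{aI}E_{bI}=\delta^a_b$ with $\bar\eta^I_J=\delta^I_J-\zeta n^I n_J$. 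For this $n$ the decomposition $\pi^{aIJ}=\bar\pi^{aIJ}[n]+2n^{[I}E^{a|J]}$ of (\ref{2.12}) holds identically, and, the $E^a_I$ being linearly independent, $n^I$ is their unique normal. So the only datum to be fixed is $n^I$, and the only non-trivial property still to be arranged is tracefreeness $\bar\pi^{aIJ}[n]E_{aI}[n]=0$; setting $\tau^I[n]:=\bar\pi^{aIJ}[n]E_{aJ}[n]$, which always satisfies $\tau^I n_I=0$, this means exhibiting a normalised $n^I$ with $\tau^I[n]=0$, and single-valuedness of $n^I[\pi]$ additionally requires uniqueness of this solution.

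The next step is to recast $\tau^I[n]=0$ as a fixed-point equation. Contracting $\pi^{aIJ}=\bar\pi^{aIJ}[n]+2n^{[I}E^{a|J]}$ with $E_{aJ}[n]$ and using $E^{aJ}E_{aJ}=D$, $n^J E_{aJ}=0$ gives, for every unit $n$,
\be
\pi^{aIJ}E_{aJ}[n]=\tau^I[n]+D\,n^I \text{,}
\ee
so that $\tau^I[n]=0$ is equivalent to $n=\Phi[n]$, where
\be
\Phi^I[n]:=\tfrac1D\,\pi^{aIJ}E_{aJ}[n]=-\tfrac{\zeta}{D}\,\pi^{aIJ}Q_{ab}[n]\,\pi^b{}_J{}^L\,n_L=n^I+\tfrac1D\tau^I[n] \text{.}
\ee
Thus $\Phi$ is a rational self-map of the sphere $\{n^I n_I=\zeta\}$ up to renormalisation (the step $n\mapsto n+\tfrac1D\tau[n]$ changes $n\cdot n$ only at second order in $\tau$, because $\tau\cdot n=0$), and clearing denominators turns $n=\Phi[n]$ into a polynomial system in the $D$ independent components of $n$, whose solution set is therefore finite.

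Existence, and on a restricted class uniqueness, of $n^I[\pi]$ is then obtained in two complementary ways. If $\pi$ is already simple, $\pi^{aIJ}=2n_0^{[I}E_0^{a|J]}$ with $E_0^a n_0=0$ and $n_0\cdot n_0=\zeta$, one checks $E^{aI}[n_0]=E_0^{aI}$ and $\bar\pi^{aIJ}[n_0]=0$, so $\tau[n_0]=0$ and $n_0$ is a fixed point of $\Phi$; if the linearisation $\delta\tau/\delta n$ at $n_0$ restricted to the transverse directions is non-singular --- this is where non-degeneracy and positive definiteness of $Q^{ab}$ are used --- the implicit function theorem provides a unique fixed point $n^I[\pi]$, depending smoothly on $\pi$, for all $\pi$ near the simple ones, hence a smooth section $\pi\mapsto(n^I[\pi],E^a_I[\pi])$ there. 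Alternatively one bounds the Lipschitz constant of $\Phi$ on sublevel sets controlled by the non-degeneracy parameter of $Q^{ab}[n]$ and shows the iteration $n_{(k+1)}=\Phi[n_{(k)}]$ converges there. Once $n^I[\pi]$ and $E^{aI}:=-\zeta\pi^{aIJ}n_J$ are in hand, the remaining claims --- $E^{aI}E_{aJ}=\bar\eta^I_J$ and $E^{aI}E_{bI}=\delta^a_b$, the fact that $n$ is the unique normal of $E^a_I$, the decomposition $\pi^{aIJ}=\bar\pi^{aIJ}+2n^{[I}E^{a|J]}$, and $\bar\pi^{aIJ}E_{aI}=-\tau^I=0$ --- all follow directly.

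The main obstacle is extending this to \emph{all} admissible $\pi$, i.e. those with $\zeta\pi^{aIJ}\pi^b{}_{IJ}/2$ positive definite: neither the global invertibility of $\delta\tau/\delta n$ needed for an unrestricted implicit-function argument, nor a global contraction estimate for $\Phi$, is available, and the polynomial reformulation only guarantees finitely many solutions, from which a preferred branch must be selected by hand --- at the price, as the main text observes, of the constraint surface ceasing to be a vector space. Hence the honest conclusion is that $E^a_I[\pi]$, $n^I[\pi]$, and the tracefree transverse decomposition $\pi^{aIJ}=\bar\pi^{aIJ}_T+2n^{[I}E^{a|J]}$ exist and depend smoothly on $\pi$ on an open non-empty class of tensors containing every simple one, which suffices for all uses made of them in the main text, where only a local differentiable section is ever needed.
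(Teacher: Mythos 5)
Your reduction of the theorem to the single equation ``trace part of $\bar\pi$ vanishes'' and its recasting as a fixed--point problem for the unit vector $n^I$ is exactly the route the appendix itself takes: your identity $\pi^{aIJ}E_{aJ}[n]=\tau^I[n]+D\,n^I$ is equation (\ref{d}), and your map $\Phi$ is, up to normalisation, the map $f^I(n)/(n_Jf^J(n))$ of (\ref{h}). Be aware, however, that the paper does \emph{not} prove this theorem: the appendix explicitly presents it as ``the result that one would like to prove'' and offers only ideas towards a proof (a Brouwer argument on the ball $B_{D+1}$ via the map (\ref{i}), which is inconclusive because $n=0$ is a trivial fixed point, and a contraction estimate which fails), while the main text simply \emph{assumes} uniqueness of $n^I[\pi]$ on a suitably restricted class of $\pi$. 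Measured against the literal statement (existence and uniqueness for all $\pi$ with the stated linear--independence property), your proposal therefore has the same gap as the paper, and your honest concluding paragraph correctly reflects the actual state of affairs.

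Within that caveat, your genuinely new ingredient --- anchoring an implicit--function--theorem argument at the simple configurations --- is a good idea and does yield a correct local statement, but as written it is conditional: you never verify that the transverse linearisation $\delta\tau/\delta n$ at a simple $\pi_0=2n_0^{[I}E_0^{a|J]}$ is invertible, and your attribution of this to positive definiteness of $Q^{ab}$ is off the mark. The invertibility can be checked directly: perturbing $n=n_0+\delta n$ with $\delta n\cdot n_0=0$ one finds to first order $E^{aI}[n]=E_0^{aI}-\zeta n_0^I(E_0^{aJ}\delta n_J)$, $\bar\pi^{aIJ}[n]=-2\,\delta n^{[I}E_0^{a|J]}$, and hence
\be
\tau^I[n]=\bar\pi^{aIJ}[n]\,E_{aJ}[n]=-(D-1)\,\delta n^I+O(\delta n^2)\text{,}
\ee
so the linearisation is $-(D-1)$ times the identity on the transverse directions and the IFT applies for all $D\ge 2$; with this computation added, your local existence, uniqueness and smooth dependence near simple $\pi$ is a complete argument, and it is something the appendix does not contain. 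By contrast, your proposed ``alternative'' via a Lipschitz bound for the iteration $n_{(k+1)}=\Phi[n_{(k)}]$ should be dropped or heavily qualified: the appendix shows in (\ref{j_8}) that the natural iteration map has ${\rm Tr}(D^TD)\ge D^2\|f\|^2$ and so cannot be a contraction, which is precisely why the paper's Banach--fixed--point attempt fails. Finally, note that finiteness of the solution set of the polynomial system (\ref{g}) is asserted in the paper but is itself only a claim about the generic situation; do not lean on it as if it delivered a preferred branch.
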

In what follows we describe some ideas towards a possible  proof.\\ 
\\
Given $\pi^{aIJ}$, let $n^I$ be {\it any} unit vector to begin with and 
construct
$\bar{\eta}_{IJ}$ as above. Define $E^{aI}[\pi,n]:=- \zeta \pi^{aI}\,_J n^J$. Notice 
that automatically $E^{aI} n_I=0$. Then we obtain the decomposition
\be \label{a}
\pi^{aIJ}=\bar{\pi}^{aIJ}+2 n^{[I}E^{a|J]} \text{.}
\ee
It is interesting to note that the non zero vector (due to the assumed linear
independence)  
\be \label{b}
N_I[\pi,n]:=\epsilon_{I J_1..J_D}\epsilon_{a_1..a_D}\;
E^{a_1 J_1}[\pi,n]..E^{a_D J_D}[\pi,n]
\ee
coincides up to normalisation with $n_I$ no matter what $\pi$ is. Furthermore 
\be \label{b2}
\eta^{IJ} N_I N_J=\zeta \; [D!]^2\; \det(Q);\;Q^{ab}:=\eta^{IJ} E^a_I E^b_J \text{,}
\ee
where $\zeta =\pm 1$ if $\eta$ has Euclidean or Lorentzian signature respectively.
In particular, we verify that for $\zeta =-1$, the vector $N_I$ is timelike, null
or spacelike if and only if $n_I$ is.  

The tensor 
$E_{aI}[\pi,n]$ is given, up to normalisation, by
\be \label{c}
E_{aI}\propto -\epsilon_{IJ_1.. J_D}\; \epsilon_{a a_2 .. a_D}\; n^{J_1}\;
E^{a_2 J_2}\;..\;E^{a_D J_D} \text{.}
\ee
The condition that $\bar{\pi}^{aIJ}$ be tracefree with respect to $E$
becomes 
\be \label{d}
\bar{\pi}^{aIJ} E_{aI}=[\pi^{aIJ}-2n^{[I}E^{a|J]}]\; E_{aI}
=\pi^{aIJ}\;E_{aI}+ \; D\; n^J=0 \text{.}
\ee
We can reformulate this as the condition that $\pi^{aIJ} E_{aI}$ is 
longitudinal, the coefficient of proportionality then follows from 
the normalisations. We define the tensor 
\be \label{e}
\kappa_{IJ_1..J_D}[\pi]:=\epsilon_{I K_1..K_D}\; \epsilon_{a_1..a_D}
\pi^{a_1 K_1}\;_{J_1}\;..\;\pi^{a_D K_D}\;_{J_D} \text{,}
\ee
which is totally symmetric in $J_1,..,J_D$ and only depends on $\pi$, not
on $n$. In terms of this tensor the tracefree condition becomes 
\be \label{e1}
\pi^{aIJ} E_{aI}\propto \kappa_I\;^J\;_{J_2 .. J_D}\; n^I\; 
n^{J_2} .. n^{J_D} \stackrel{!}{\propto} n^J \text{.}
\ee
Using the normalisation condition we can write this as the equality
\be \label{f}
\kappa_{J_1 I J_3..J_{D+1}}\; n^{J_1}\; n^{J_3}\;..\; n^{J_{D+1}}=\zeta n_I\; 
\kappa_{J_1..J_{D+1}} n^{J_1}\;..\;n^{J_{D+1}} \text{.}
\ee
This is a system of $D$ independent non-polynomial equations of order $D+2$
for the $D$ independent unknowns $n^I,\;I=1,..,D;\;n^{D+1}=
\pm \sqrt{1-\zeta \sum_{I=1}^D (n^I)^2}$. We can turn this into an equivalent 
system of $D+1$ homogeneous, polynomial equations of order $D+2$
by    
\be \label{g}
\kappa_{J_1 I J_3..J_{D+1}}\; n_{J_2}\; n^{J_1}\;..\; n^{J_{D+1}}= n_I\; 
\kappa_{J_1..J_{D+1}} n^{J_1}\;..\;n^{J_{D+1}} \text{,}
\ee
which leaves the normalisation of $n^I$ undetermined. 

Up to this point, $n$ was just an extra structure independent of and 
next to $\pi$.
The idea is now to fix $n^I$ in terms of $\pi$ by solving the system 
(\ref{f}). Having 
determined $n^I=n^I[\pi]$ would then yield the desired tensor 
$E^{aI}[\pi]:=E^{aI}[\pi,n[\pi]]$. However, it is far 
from clear whether a solution exists, nor that it is unique, 
although the number
of independent equations matches with the number of degrees of freedom to 
be fixed. Being polynomial, it is clear that {\it complex} solutions of 
(\ref{g}) exist, but the system of equations is far too complex in order 
to see whether {\it real} solutions exist. Hence to secure at least existence,
we must resort to different methods.

Since the polynomial formulation (\ref{g}) is of no help, we stick with 
(\ref{f}). We write it in the form
\be \label{h}
n^I=\zeta \frac{f^I(n)}{n_J f^J(n)},\;\;f_I(n)=\kappa_{J_1}\;^I\;_{J_3..J_{D+1}}
n^{J_1}\; n^{J_3}\;..\; n^{J_{D+1}} \text{.}
\ee
This equation takes the form of a {\it fixed point equation} $x=f(x)$. 
In order to apply established theorems, (\ref{h}) is not useful because 
fixed point theorems typically are for compact sets and the right hand side 
of (\ref{h}) has not manifestly bounded range, especially for signature
$\zeta=-1$. 

Consider instead the function
\be \label{i}
F^I(n):=||n||\;[2-||n||]\frac{f^I(n)}{||f(n)||} \text{.}
\ee
Notice that we use the {\it Euclidean} metric in both numerator and 
denominator also for the 
case $\zeta=-1$, i.e. $||n||^2:=\delta_{IJ} n^I n^J$.  
Let us now restrict $n^I$ to the compact (closed and bounded) and 
convex\footnote{Suppose that $||u||,||v||\le 1$ then  
$$
||s u+(1-s)v||^2=s^2 ||u||^2+(1-s)^2 ||v||^2+2s(1-s)<u,v>
\le s^2 ||u||^2+(1-s)^2 ||v||^2+2s(1-s)||u|| \; ||v||
\le 1
$$ 
for any $s\in [0,1]$ due to the Cauchy Schwarz inequality.} 
$(D+1)$-ball 
\be \label{j}
B_{D+1}:=\{n\in \mathbb{R}^{D+1};\;\delta_{IJ} n^I n^J\le 1\} \text{.}
\ee
The map (\ref{i}) is a continuous map from 
$B_{D+1}$ to itself. To see this, notice that $||f(n)||\not=0$ except at 
$||n||=0$. This follows from the identity
\be \label{j_1}
f_I(n) n^I=N_I(n) n^I \text{,}
\ee
which for $\zeta=1$ and for $\zeta=-1$ and $n$ not null shows that $||f||\not=0$
unless $||n||=0$.
For $\zeta=-1$ and $n$ null we have in fact $f_I=\gamma n_I$ with 
$\gamma\not=0$. To see this, notice that the span of the $E^{aI},\;a=1,..,D$
contains $n^I$. Introduce some basis of the orthogonal complement, say
$b_\alpha^I,\; \alpha=2,.., D$ and let $b_1^I=n^I$ so that 
$\eta_{IJ} b^I_\alpha b^J_\beta=\delta_{\alpha\beta},\;\alpha,\beta=2,..,D$ and 
$b_1^I b_{\alpha I}=0$. Then we have an expansion $E^{aI}=r^{a\alpha} 
b^I_\alpha$ with $\det(r)\not=0$ due to linear independence by assumption. 
Next, there exists $\gamma\not=0$ such that 
\be \label{j_2}
\epsilon_{IJ_1..J_D} 
b^{J_1}_{\alpha_1} .. b^{J_D}_{\alpha_D}=\gamma n_I 
\epsilon_{\alpha_1..\alpha_D} \text{,}
\ee
since the $b_\alpha$ are linearly independent and the left hand side of 
(\ref{j_2}) is orthogonal to all of them, hence it must be null. We can 
therefore compute
\begin{eqnarray} \label{j_3}
f_I
&=& \delta_1^{\alpha_1} b_{\alpha_1}^{J_1}\epsilon_{J_1 K_1 .. K_D}
\epsilon_{a_1..a_D}\pi^{a_1 K_1}\;_I E^{a_2 K_2} .. E^{a_D K_D}
\nonumber\\
&=& -\delta_1^{\alpha_1} [\epsilon_{J K_1 .. K_D}
b^{K_1}_{\alpha_1}.. b^{K_D}_{\alpha_D}]
\epsilon_{a_1..a_D}\pi^{a_1 J}\;_I r^{a_2 \alpha_2} .. r^{a_D \alpha_D}
\nonumber\\
&=& -\gamma \epsilon_{1\alpha_2..\alpha_D} n_J
\epsilon_{a_1..a_D}\pi^{a_1 J}\;_I r^{a_2 \alpha_2} .. r^{a_D \alpha_D}
\nonumber\\
&=& \gamma \epsilon_{1\alpha_2..\alpha_D} 
\epsilon_{a_1..a_D}E^{a_1}_I r^{a_2 \alpha_2} .. r^{a_D \alpha_D}
\nonumber\\
&=& \gamma \epsilon_{1\alpha_2..\alpha_D} b_{\beta I}
\epsilon_{a_1..a_D}r^{a_1\beta} r^{a_2 \alpha_2} .. r^{a_D \alpha_D}
\nonumber\\
&=& \gamma \epsilon_{1\alpha_2..\alpha_D} b_{\beta I}
\epsilon_{\beta\alpha_2..\alpha_{D-1}}\det(r)
\nonumber\\
&=& \gamma [(D-1)!] \det(r) n_I \text{.}
\end{eqnarray}

It follows that (\ref{i}) is everywhere well defined except possibly at 
$||n||=0$ where the fraction $f(n)/||f(n)||$ is ill defined. However, 
due to the prefactor $||n||$ we see that $F(n):=0$ at $||n||=0$ is a 
continuous extension. Next  
\be \label{j_4}
||F(n)||=||n||(2-||n||)=1-[1-||n||]^2\in [0,1]\text{,}
\ee
hence $F$ maps $B_{D+1}$ to itself. By the {\it Brouwer Fixed Point Theorem} 
\cite{ReedBook1} applicable to compact convex subsets of Euclidean space,
it has a fixed point, that is, the equation $n^I=F^I(n)$ has at least 
one solution, a fixed point $n^I=n^I_\ast[\pi]$. Unfortunately, this is not 
very helpful because $n=0$ is a trivial fixed point and the Brouwer fixed 
point theorem does not tell us anything about the number of fixed points,
hence it could be that $n=0$ is the only one. However, notice that 
$||F(n)||\ge ||n||$ and $||F(n)||=||n||\Leftrightarrow ||n||=1$. This 
suggests that if a fixed point can be found by iteration $n_{k+1}:=F(n_k)$
then it will lie on the sphere $S^D$. Indeed for $||n||=1$ the map $F$
maps $S^D$ to itself. 

In order to see whether a fixed point can be obtained by using iteration 
methods we estimate for $n_1,n_2\in S^D$
\begin{eqnarray} \label{j_5}     
||F(n_1)-F(n_2)||^2 &=& 
2[1-\frac{<f_1,f_2>}{||f_1||\;||f_2||}]]
\nonumber\\
&=& \frac{1}{||f_1||\; ||f_2||}[||f_1-f_2||^2-[||f_1||-||f_2||]^2]
\nonumber\\
&\le & 
\frac{||f_1-f_2||^2}{||f_1||\; ||f_2||} \text{.}
\ea
Now
\be \label{j_6}
D_{IJ}:=\partial f_I/\partial n_J=[\kappa_{JIJ_2..J_D}+(D-1)\kappa_{IJ J_2..J_D}]
n^{J_2}..n^{J_D} \text{.}
\ee
It follows $D_{IJ} n^J=D f_I$.
For $n_2$ sufficiently close to $n_1$, we obtain with the Cauchy Schwarz 
inequality
\be \label{j_7}
||f_2-f_1||^2\approx ||D(n_1)[n_2-n_1]||^2\le {\rm Tr}(D^T(n_1) D(n_1)) 
\; ||n_2-n_1||^2
\ee
and for $||n||=1$ again due to the Cauchy Schwarz inequality
\be \label{j_8}
{\rm Tr}(D^T D)=\sum_I\{[\sum_J [D_{IJ}]^2] ||n||^2\}\ge 
\sum_I [\sum_J D_{IJ} n^J]^2\ge D^2 ||f||^2 \text{.}
\ee
Thus the right hand side of (\ref{j_5}) is given by $q(n_1,n_2) ||n_2-n_1||^2$,
where $q(n,n)\ge D^2$. Hence $F$ fails to be a contraction map and we cannot
invoke techniques familiar from the {\it Banach Fixed Point Theorem} 
\cite{ReedBook1} in order to prove existence of a fixed point as this would need 
$\sup_{n_1,n_2} q(n_1,n_2)<1$. Either sharper bounds are needed or we have to
use a different iteration function (recall that fixed point equations can
be written in many different but equivalent ways and for some of them 
the iteration map maybe contractible, for others not). Notice that as long as 
we are only interested in obtaining $f(n)\propto n$ we may rescale $f$ by 
a sufficiently large constant such that $f$ itself becomes a contraction map
and maps $B_{D+1}$ to itself. This is possible because $f$ and the 
matrix defined by $f(n_2)-f(n_1)=D(n_2,n_1)\cdot(n_2-n_1)$ are continuous 
maps on the compact sets $B_{D+1}$ and $B_{D+1}\times B_{D+1}$ respectively and 
thus are uniformly bounded. However, due to (\ref{j_8}) the required constant
would turn $f$ into a strictly norm decreasing map and thus can only have 
$n=0$ as a fixed point.\\
\\
\\
Remark:\\
In contrast to $D$ odd, for $D$ even there are two natural vectors that
one construct purely from $\pi$ namely 
\be \label{s}
u_I=\kappa_{I J_1.. J_D} \eta^{J_1 J_2}..\eta^{J_{D-1} J_D},\;\;
v_I=\kappa_{J_1 I J_2 .. J_D} \eta^{J_1 J_2}..\eta^{J_{D-1} J_D} \text{.}
\ee
These are the only independent contractions that exist because 
$\kappa_{I J_1 .. J_D}$ is completely symmetric in its $J$ indices. 
It is natural to assume that the fixed point vector is a linear 
combination of $u,v$ and indeed in $D=2$ it is easy to see that $n\propto u$.
For $D\ge 4$ we were not able to verify this by direct calculation or any
other means due to the complexity of the fixed point equation.

\end{appendix}

\newpage

\bibliography{pa87pub.bbl}

\end{document}